\newtheorem{theorem}{Theorem}
\newtheorem{lemma}{Lemma}
\newtheorem{corollary}{Corollary}
\theoremstyle{definition}
\newtheorem{definition}{Definition}
\renewcommand*\env@matrix[1][c]{\hskip -\arraycolsep
  \let\@ifnextchar\new@ifnextchar
  \array{*\c@MaxMatrixCols #1}}
\newcommand{\p}{\pmb}
\newcommand{\xb}{\pmb{x}} 
\newcommand{\cb}{\pmb{c}}
\newcommand{\Ac}{\mathcal{A}} 
\newcommand{\Vc}{\mathcal{V}} 
\newcommand{\Ec}{\mathcal{E}}
\newcommand{\Bc}{\mathcal{B}}
\newcommand{\Uc}{\mathscr{U}} 
\newcommand{\Pc}{\mathscr{P}} 
\newcommand{\Ef}{\mathfrak{E}}
\newcommand{\Df}{\mathfrak{D}}
\newcommand{\mais}{{\sf MAIS}}
\newcommand{\Bfull}{{\Bc^{*}}}
\newcommand{\BfullS}{{\Bc_{\!S}^{*}}}
\newcommand{\Gfull}{{G^{*}}}
\newcommand{\Gufull}{{G_{\!u}^{*}}}
\newcommand{\GfullS}{{G_{\!S}^{*}}}
\newcommand{\GufullS}{{G_{\!S,u}^{*}}}
\newcommand{\GfullM}{{G_{\!M}^{*}}}
\newcommand{\GfullMdash}{{G_{\!M'}^{*}}}
\newcommand{\betaoptG}{{\beta_{{\sf opt},G}}}
\newcommand{\betaoptB}{{\beta_{{\sf opt},\Bc}}}
\newcommand{\betaoptGfull}{{\beta_{{\sf opt},\Gfull}}}
\begin{document}
\title{Index Codes with Minimum Locality for\\ Three Receiver Unicast Problems} 
\author{\IEEEauthorblockN{Smiju Kodamthuruthil Joy and Lakshmi Natarajan}
\thanks{The authors are with Department of Electrical Engineering, Indian Institute of Technology Hyderabad, Sangareddy 502\,285, India (email: \{ee17resch11017,\,lakshminatarajan\}@iith.ac.in).}
}%

\maketitle

\begin{abstract}
An index code for a broadcast channel with receiver side information is \emph{locally decodable} if every receiver can decode its demand using only a subset of the codeword symbols transmitted by the sender instead of observing the entire codeword.
Local decodability in index coding improves the error performance when used in wireless broadcast channels, reduces the receiver complexity and improves privacy in index coding.
The \emph{locality} of an index code is the ratio of the number of codeword symbols used by each receiver to the number message symbols demanded by the receiver. 
Prior work on locality in index coding have considered only single unicast and single-uniprior problems, and the optimal trade-off between broadcast rate and locality is known only for a few cases. 
In this paper we identify the optimal broadcast rate (including among non-linear codes) for all three receiver unicast problems when the locality is equal to the minimum possible value, i.e., equal to one.
The index code that achieves this optimal rate is based on a clique covering technique and is well known.
The main contribution of this paper is in providing tight converse results by relating locality to broadcast rate, and showing that this known index coding scheme is optimal when locality is equal to one.
Towards this we derive several structural properties of the side information graphs of three receiver unicast problems, and combine them with information theoretic arguments to arrive at a converse.
\end{abstract}

\section{Introduction} \label{sec:introduction}

Index coding is a class of network coding problems with a single broadcast link connecting a transmitter with multiple receivers~\cite{BBJK_IT_11,BBJK_FOCS_06}. Each receiver or user demands a subset of messages available at the transmitter while knowing another subset of messages as side information.
The code design objective is to broadcast a codeword with as small a length as possible to meet the demands of all the users simultaneously.
The \emph{broadcast rate} or the \emph{rate} of an index code is the ratio of the code length to the length of each of the messages.
The problem of designing index codes with smallest possible broadcast rate is significant because of its applications, such as multimedia content delivery~\cite{BiK_INFOCOM_98}, coded caching~\cite{MaN_IT_14}, distributed computation~\cite{LLPPR_IT_18}, and also because of its relation to network coding~\cite{RSG_IEEE_IT_10,ERL_IT_15} and coding for distributed storage~\cite{Maz_ISIT_14,ShD_ISIT_14}.

An index code is \emph{locally decodable} if every user can decode its demand by using its side information and by observing only a subset of the transmitted codeword symbols (instead of observing the entire codeword)~\cite{HaL_ISIT12}. 
The \emph{locality} of an index code is the ratio of the maximum number of codeword symbols observed by any receiver to the number of message symbols demanded by the receiver~\cite{NKL_ISIT18}. The objective of designing locally decodable index codes is to construct coding schemes that simultaneously minimize rate and locality, and attain the optimal trade-off between these two parameters.
Locally decodable index codes reduce the number of transmissions that any receiver has to listen to, and hence, reduce the receiver complexity as well, see for instance~\cite{VaR_arxiv_19}.
When index codes are to be used in a fading wireless broadcast channel, the probability of error at the receivers can be reduced by using index codes with small locality~\cite{TRAR_TVT_17}.  
Locality is also known to be related to privacy in index coding~\cite{KSCF_arxiv_18}.

We consider \emph{unicast index coding problems} where the transmitter is required to broadcast $N$ independent messages $\xb_1,\xb_2\dots,\xb_N$ to $n$ users or receivers $u_1,\dots,u_n$. Each message $\xb_j$, $j\in[N]$, is desired at exactly one of the receivers, while each receiver can demand any number of messages. 
The receiver $u_i$ wants the messages $\xb_j$, $j \in W_i$, and knows $\xb_j$, $j \in K_i$ as side information, where $W_i,K_i \subset [N]$ and $W_i \cap K_i = \phi$. Since we consider only unicast problems, we have $W_i \cap W_j = \phi$ for $i \neq j$. The set $Y_i = [N] \setminus (W_i \cup K_i)$ corresponds to the messages that $u_i$ neither demands nor knows.
Note that $n \leq N$ and equality holds if and only if every receiver demands a unique message from the transmitter. The case \mbox{$n=N$} is known as \emph{single unicast index coding}.

Prior work on locally decodable index codes have considered single unicast index coding problems~\cite{HaL_ISIT12,NKL_ISIT18,VaR_arxiv_19}, problems where each receiver demands exactly one message~\cite{KSCF_arxiv_18}, and a family of index coding problems called \emph{single uniprior}\footnote{Single uniprior problems are index coding problems where every message is available as side information at a unique receiver.}~\cite{TRAR_TVT_17,OHL_IT_16}.
The exact characterization of the optimal trade-off between rate and locality is known in only a few cases, all of them being single unicast problems: optimal rate among non-linear codes for single unicast problems and locality equal to one~\cite{NKL_ISIT18}, rate-locality trade-off among linear codes for single-unicast problems when min-rank is one less than the number of receivers~\cite{NDKL_arxiv_19}.

In this paper, we consider unicast index coding problems with three or fewer receivers for the smallest possible value of locality, i.e, locality equal to one.
We show that for any such problem, the index coding scheme based on clique covering of a related graph (the underlying undirected side information graph) provides the optimal rate among all codes, including non-linear codes.  
This explicitly identifies one end-point of the optimal trade-off between rate and locality, namely the minimum-locality point, for all unicast index coding problems with three or fewer receivers.
The clique covering based coding scheme that achieves this point on the optimal rate-locality trade-off is well known. The main contribution of this paper is in proving converse results, i.e., showing that this scheme is optimal among all codes with locality equal to one, including non-linear codes. 

Our converse relies on several properties of the directed and undirected side information graphs of the three receiver unicast problems. We derive these technical results, which are related to independence number, perfectness and maximum acyclic induced subgraphs, in Section~\ref{sec:technical_prelim}. 
These results are combined with graph theoretic and information theoretic arguments in Section~\ref{sec:main_result} to prove the main result of this paper.
The system model and related background are reviewed in Section~\ref{sec:defn}.

\emph{Notation:} For any positive integer $N$, $[N]$ denotes the set $\{1,\dots,N\}$. The symbol $\phi$ denotes the empty set. Vectors are denoted using bold small letters, such as $\xb$.

\section{System Model and Background} \label{sec:defn}

\subsubsection*{Graphs associated with index coding}

We represent unicast and single unicast index coding problems using (directed) bipartite graphs and (directed) side information graphs, respectively. 
Following~\cite{TDN_ISIT12} we represent an unicast index coding problem by a directed bipartite graph $\Bc=(\Uc,\Pc,E)$ where $\Uc=\{u_1,\dots,u_n\}$ is the vertex set of all receivers, and $\Pc=\{\xb_1,\dots,\xb_N\}$ is the vertex set of messages. 
The edge set $E$ contains $(\xb_j,u_i)$ if $j \in W_i$ and contains $(u_i,\xb_j)$ if $j \in K_i$. 

When the unicast index coding problem is also single unicast, i.e., when each message is demanded by a unique receiver, we use the side information graph~\cite{BBJK_IT_11} $G=(\Vc,\Ec)$ to represent the problem. 
Here, the vertex set $\Vc=[N]$, and the edge set $\Ec$ contains the directed edge $(i,j)$ if the receiver demanding $\xb_i$ knows the message $\xb_j$ as side information.
The \emph{underlying undirected side information graph}~\cite{SDL_ISIT13} \mbox{$G_u=(\Vc,\Ec_u)$} corresponding to $G$ is the graph with vertex set $\Vc=[N]$ and an undirected edge set \mbox{$\Ec_u=\left\{\,\{i,j\} \, | \, (i,j),\,(j,i) \in \Ec \right\}$}, i.e., \mbox{$\{i,j\} \in \Ec_u$} if and only if 
both $(i,j), (j,i) \in \Ec$.

\subsubsection*{Locality in index coding}

We assume that the messages $\xb_1,\dots,\xb_N$ are vectors over a finite alphabet $\Ac$, i.e., $\xb_i \in \Ac^m$ for some integer $m$. The encoder $\Ef$ at transmitter maps $\xb_1,\dots,\xb_N$ into a length $\ell$ codeword $\cb \in \Ac^{\ell}$.
We assume that each receiver observes only a subset of the codeword symbols. Specifically, let $u_i$ observe the subvector $\cb_{R_i}=(c_k,k \in R_i)$ where $R_i \subseteq [\ell]$. 
The decoder $\Df_i$ at $u_i$ outputs the demand $\xb_{W_i}=(\xb_j,j \in W_i)$ using the channel observation $\cb_{R_i}$ and side information $\xb_{K_i}=(\xb_j,j \in K_i)$ as inputs. 
We say that $(\Ef,\Df_1,\dots,\Df_n)$ is a \emph{valid index code} if every receiver can decode its demand using its side information and channel observation. 
The \emph{broadcast rate} of the index code is $\beta=\ell/m$. Note that $u_i$ observes $|R_i|$ coded symbols to decode $m|W_i|$ message symbols present in the vector $\xb_{W_i}$. 
The \emph{locality} at $u_i$ is $r_i=|R_i|\,/\,m|W_i|$ and the \emph{locality} or the \emph{overall locality of the index code} is $r=\max_{i \in [n]} r_i$. 
The locality of the index code is the maximum number of channel observations made by any receiver to decode one message symbol.
Since the side information at $u_i$ is independent of the demanded message $\xb_{W_i}$, the receiver must observe at least $m|W_i|$ coded symbols to be able to decode $\xb_{W_i}$, i.e., $r_i = |R_i|\,/\,m|W_i| \geq 1$. Thus, $r \geq 1$ for any valid index code. We say that an index code has minimum locality if $r=1$.

The optimal broadcast rate (infimum among the rates of all valid index codes, without any restriction on the locality $r$ of the codes or on the message length $m$) of a unicast index coding problem $\Bc$ will be denoted by $\betaoptB$, and that of a single unicast problem $G$ by $\betaoptG$.

\begin{definition}
For a unicast index coding problem $\Bc$ (respectively, for a single unicast problem $G$), the {\em optimal broadcast rate function} $\beta_{\Bc}^*(r)$ (respectively $\beta_{G}^*(r)$) is the infimum of the broadcast rates among all valid index codes over all possible message length $m \geq 1$ with locality at the most $r$.
\end{definition}

Note that the function $\beta_G^*(r)$ is non-increasing and $\beta_G^*(r) \geq \betaoptG$ for any $r$, since $\betaoptG$ is the best broadcast rate achievable without any constraints on the locality of the code.

\subsubsection*{Graph-theoretic background}

We will briefly recall relevant graph-theoretic terminology~\cite{graphtheory}. 
A subset $S$ of the vertices of an undirected graph $G_u$ is an {independent set} if no two vertices in $S$ are adjacent. 
The number of vertices in a maximum-sized independent set of $G_u$ is called the independence number of $G_u$ and is denoted by $\alpha(G_u)$. 
For any directed graph $G$, let $\mais(G)$ denote the size of the maximum acyclic induced subgraph of $G$.
If $S$ is a subset of vertices of $G$, let $G_S$ denote the subgraph of $G$ induced by $S$, i.e., $G_S$ has vertex set $S$ and $G_S$ consists of all edges in $G$ with both end points in $S$.

An undirected graph $G_u$ is called \emph{perfect} if for every induced subgraph $H$, $\alpha(H) = \bar{\chi}(H)$, where $\bar{\chi}$ denotes the clique covering number. 
For any $G_u$, $\alpha(G_u)\leq \bar{\chi}_f(G_u)\leq \bar{\chi}(G_u)$, where $\bar{\chi}_f$ denotes the fractional clique covering number. Thus for a perfect graph we have $\bar{\chi}_f(G_u)= \bar{\chi}(G_u)$.

\begin{theorem}[{The strong perfect graph theorem~\cite{strongperfect}}] \label{thm:strong_perfect}
An undirected graph $G_u$ is perfect if and only if no induced subgraph of $G_u$ is an odd hole (odd cycle of length at least 5) or an odd antihole (compliment of odd hole).
\end{theorem}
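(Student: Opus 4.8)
The equivalence splits into an easy direction and the deep direction that is the theorem's real content. For the ``only if'' direction it suffices to note that perfection is a hereditary property, so a perfect graph can contain no induced subgraph that is itself imperfect, and that each forbidden configuration is imperfect: an odd hole $C_{2k+1}$ with $k\geq 2$ has $\alpha = k$ but clique covering number $k+1$ (its cliques are single vertices and edges, and covering $2k+1$ vertices requires at least $\lceil (2k+1)/2\rceil$ of them), while an odd antihole $\overline{C_{2k+1}}$ has $\alpha = \omega(C_{2k+1}) = 2$ and clique covering number $\chi(C_{2k+1}) = 3$. Hence a graph with such an induced subgraph is not perfect, which is the contrapositive of this direction.

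The ``if'' direction --- that every \emph{Berge graph}, i.e.\ a graph with no odd hole and no odd antihole as an induced subgraph, is perfect --- is the theorem of Chudnovsky, Robertson, Seymour and Thomas, and the plan is to follow their route. First I would pass to a \emph{minimally imperfect Berge graph} $G_u$, one that is imperfect while every proper induced subgraph is perfect, and seek a contradiction. The core step is the \emph{decomposition theorem for Berge graphs}: every Berge graph is either \emph{basic} --- bipartite, the line graph of a bipartite graph, the complement of one of these, or a double split graph --- or it admits one of a short list of structural decompositions, namely a proper $2$-join, a proper $2$-join in its complement, or a balanced skew partition. One then plays the two ends of this dichotomy against minimality. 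On one side, every basic graph is perfect: bipartite graphs by K\"onig's theorem, line graphs of bipartite graphs by K\"onig's edge-coloring theorem, their complements because the complement of a perfect graph is perfect (Lov\'asz), and double split graphs by a direct argument; hence $G_u$ cannot be basic. On the other side, a minimally imperfect graph admits none of the three decompositions: from a $2$-join in $G_u$ or in its complement one splices together a strictly smaller imperfect Berge graph, and a minimally imperfect graph has no balanced skew partition (a sharpening of Chv\'atal's skew-partition program). With both ends excluded, the decomposition theorem says nothing about $G_u$, so no minimally imperfect Berge graph exists and the ``if'' direction follows.

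The overwhelming difficulty, and essentially the only serious obstacle, is the decomposition theorem itself. Its proof is a long structural case analysis organized around the richest basic-like substructure present in $G_u$ --- whether $G_u$ contains a large line graph of a bipartite subdivision (the ``stretched'' cases), or a long prism or an appropriate wheel, or none of these --- and in each case one either exhibits an explicit odd hole or odd antihole, contradicting that $G_u$ is Berge, or extracts one of the three permitted decompositions. The remaining steps are comparatively routine: the reduction to a minimal counterexample is immediate, and the decomposition-preservation facts are classical ($2$-joins via Cornu\'ejols--Cunningham, skew partitions building on work of Chv\'atal and of Roussel--Rubio). Since no essentially shorter self-contained argument is known, the plan ultimately amounts to reproducing the architecture of~\cite{strongperfect}.
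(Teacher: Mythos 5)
The paper states this as an external theorem cited from~\cite{strongperfect} and gives no proof of its own, so there is no in-paper argument to compare against. Your sketch is a faithful high-level account of the Chudnovsky--Robertson--Seymour--Thomas proof. The ``only if'' computations are correct: an odd hole $C_{2k+1}$ ($k\geq 2$) has $\alpha=k$ and $\bar{\chi}=k+1$, an odd antihole has $\alpha=2$ and $\bar{\chi}=3$, and since perfection is hereditary a perfect graph can contain neither as an induced subgraph. For the ``if'' direction you correctly identify the architecture --- reduce to a minimally imperfect Berge graph, invoke the decomposition theorem (basic classes versus proper $2$-join, complement $2$-join, or balanced skew partition), verify perfection of the basic classes, and show each decomposition is incompatible with minimal imperfection --- and you rightly flag that the decomposition theorem is where essentially all the work lies. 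As written this is a roadmap rather than a proof, but that matches the paper's own treatment of the result as a black box, and there is no materially shorter self-contained argument you could have given instead.
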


It is well known that for any single unicast problem $G$, $\betaoptG \geq \mais(G)$.
The optimal rate for locality $r=1$ is 
\begin{equation} \label{eq:rate_clique_covering}
\beta_G^*(1) = \bar{\chi}_f(G_u), 
\end{equation} 
see~\cite{NKL_ISIT18,HaL_ISIT12}, where $G_u$ is the underlying undirected graph corresponding to $G$.

\section{Technical Preliminaries} \label{sec:technical_prelim}

We now identify key properties of the graphs associated with the unicast index coding problems. These results will be vital in identifying the optimal minimum-locality index codes for three or fewer receivers.

\begin{lemma}\label{lma:mais_ind_set}
For any single unicast problem $G$, $\mais(G) \leq \alpha(G_u)$, where $G_u$ is the underlying undirected graph.
 \end{lemma}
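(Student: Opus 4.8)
The plan is to show that the \emph{same} vertex set witnessing $\mais(G)$ is already an independent set in $G_u$, so that no separate construction of an independent set is needed. Let $S \subseteq [N]$ be a set of vertices such that the induced subgraph $G_S$ is acyclic and $|S| = \mais(G)$.

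The key observation is about how the undirected edges of $G_u$ sit inside $G$. By definition, $\{i,j\} \in \Ec_u$ precisely when both directed edges $(i,j)$ and $(j,i)$ belong to $\Ec$. Hence, if $S$ were to contain both endpoints $i$ and $j$ of some edge $\{i,j\} \in \Ec_u$, then $G_S$ would contain the two antiparallel edges $(i,j)$ and $(j,i)$, i.e., a directed cycle of length two, contradicting the acyclicity of $G_S$. Therefore $S$ contains no two vertices that are adjacent in $G_u$; that is, $S$ is an independent set in $G_u$.

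It then follows immediately that $\alpha(G_u) \geq |S| = \mais(G)$, which is the claim. The only point requiring care is the convention that a pair of antiparallel directed edges constitutes a (length-two) directed cycle, so that such a configuration is indeed forbidden in an acyclic induced subgraph; under the standard definition of $\mais$ used in index coding this holds, and it is the crux of the argument. No further case analysis or counting is needed.
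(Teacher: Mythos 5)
Your argument is correct and is essentially identical to the paper's proof: both reduce to the observation that an edge $\{i,j\}$ in $G_u$ corresponds to antiparallel directed edges in $G$, which form a length-two cycle and hence cannot both appear in an acyclic induced subgraph, so the vertex set achieving $\mais(G)$ is independent in $G_u$. No gaps; the remark about the length-two cycle convention is exactly the convention the paper uses.
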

\begin{proof}
Consider any subset $S$ of vertices of $G=(\Vc,\Ec)$ such that $G_S$ is acyclic. Clearly, for any $i,j \in S$, $\Ec$ can not simultaneously contain both $(i,j)$ and $(j,i)$, since this implies the existence of a length $2$ cycle in $G_S$. Thus, $\{i,j\}$ is not an edge in $G_u$. Thus, if $S$ is such that $G_S$ is acyclic, then $S$ is an independent set in $G_u$.
Hence, $\mais(G)\leq \alpha(G_u)$.
\end{proof}

Equality holds in Lemma~\ref{lma:mais_ind_set} if and only if there exists a largest independent set $S$ of $G_u$ such that $G_S$ is acyclic.

\subsection{Equivalent Single Unicast Problem}

For any given unicast index coding problem $\Bc$, we consider a corresponding single unicast problem $G$ which we refer to as the \emph{equivalent single unicast problem (ESUP)} of $\Bc$.
The ESUP of a unicast problem $\Bc$ is constructed by using the following well known procedure~\cite{BBJK_IT_11}. Suppose the user $u_i$ in $\Bc$ has want set $W_i=\{i_1,\dots,i_{|W_i|}\}$ and side information set $K_i$.
Corresponding to each user $u_i$ in $\Bc$, the ESUP contains $|W_i|$ receivers all equipped with the same side information $\xb_{K_i}$, and these $|W_i|$ receivers demand one message each, $\xb_{i_1},\dots,\xb_{i_{|W_i|}}$, respectively. Thus the ESUP consists of $N=\sum_{i \in [n]}|W_i|$ receivers and an equal number of messages. 


\begin{theorem} \label{thm:U_ESU}
For any unicast index coding problem $\Bc$ and its ESUP $G$, we have
$\beta_G^*(\max_i{|W_i|})\leq\beta_{\mathcal{B}}^\ast(1)\leq\beta_{G}^\ast(1)$. 
\end{theorem}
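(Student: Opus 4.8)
The plan is to prove the two inequalities separately, each by a direct conversion between index codes for $\Bc$ and for its ESUP $G$, carefully tracking rate and locality. Throughout I assume $|W_i| \geq 1$ for every $i$: a receiver with $W_i = \phi$ has undefined locality and can simply be deleted from both $\Bc$ and $G$ without affecting either side. Since $\beta_\Bc^*(\cdot)$ and $\beta_G^*(\cdot)$ are infima over message lengths and valid codes, it suffices in each direction to start from an arbitrary valid code of the appropriate type, produce from it a valid code of the target type with the same message length $m$ and the same codeword length $\ell$ (hence the same rate $\ell/m$) and the required locality bound, and then pass to the infimum.

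For the inequality $\beta_{\Bc}^*(1) \leq \beta_G^*(1)$, fix a valid index code for $G$ with message length $m$ and locality at most $1$. Index the receiver of $G$ demanding $\xb_{i_k}$ (the $k$-th message of $W_i = \{i_1,\dots,i_{|W_i|}\}$) by the pair $(i,k)$; it observes a set $R_{i,k} \subseteq [\ell]$ with $|R_{i,k}| \leq m$ and decodes $\xb_{i_k}$ from $\cb_{R_{i,k}}$ and $\xb_{K_i}$ via some decoder $\Df_{(i,k)}$. Keep the same encoder $\Ef$ for $\Bc$, and let receiver $u_i$ of $\Bc$ observe $R_i = \bigcup_{k=1}^{|W_i|} R_{i,k}$. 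Then $u_i$ has access to each $\cb_{R_{i,k}}$ and to its side information $\xb_{K_i}$, so running $\Df_{(i,1)},\dots,\Df_{(i,|W_i|)}$ recovers all of $\xb_{W_i}$; the code is valid for $\Bc$. Its rate is unchanged, and $|R_i| \leq \sum_{k=1}^{|W_i|} |R_{i,k}| \leq m|W_i|$, so the locality at $u_i$ is $|R_i|/(m|W_i|) \leq 1$. Thus this rate is achievable for $\Bc$ with locality at most $1$, and taking the infimum over all such codes for $G$ gives $\beta_{\Bc}^*(1) \leq \beta_G^*(1)$.

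For the inequality $\beta_G^*(\max_i |W_i|) \leq \beta_{\Bc}^*(1)$, run the conversion the other way: fix a valid index code for $\Bc$ with message length $m$ and locality at most $1$, so $u_i$ observes $R_i \subseteq [\ell]$ with $|R_i| \leq m|W_i|$ and decodes all of $\xb_{W_i}$ from $\cb_{R_i}$ and $\xb_{K_i}$. Keep the same encoder for $G$, and let every receiver $(i,k)$ of $G$ observe the whole set $R_i$; since $(\cb_{R_i},\xb_{K_i})$ determines $\xb_{W_i}$, it in particular determines the single demand $\xb_{i_k}$, so the code is valid for $G$. The rate is unchanged, and the locality at $(i,k)$ is $|R_i|/(m \cdot 1) \leq |W_i| \leq \max_i |W_i|$. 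Hence this rate is achievable for $G$ with locality at most $\max_i |W_i|$, and taking the infimum over all such codes for $\Bc$ gives $\beta_G^*(\max_i |W_i|) \leq \beta_{\Bc}^*(1)$.

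I do not expect a genuine obstacle. The only points requiring care are the two pieces of locality bookkeeping: that the union $\bigcup_{k} R_{i,k}$ over the $|W_i|$ receivers of $G$ attached to $u_i$ has size at most $m|W_i|$, which is exactly the observation budget that locality $1$ allots to a receiver demanding $|W_i|$ messages; and, conversely, that loading $u_i$'s $\le m|W_i|$ observed symbols onto a single-demand receiver of $G$ inflates the locality by a factor of at most $|W_i|$. Beyond that, one must dispatch the degenerate empty-want-set case noted above, and phrase each direction at the level of infima since the conversions preserve the rate exactly rather than producing a literal optimum.
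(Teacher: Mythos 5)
Your proposal is correct and follows essentially the same route as the paper's proof: both directions are handled by the same code conversions (taking $R_i=\bigcup_k R_{i,k}$ to get locality $1$ for $\Bc$, and letting each single-demand receiver of $G$ observe all of $R_i$ to get locality at most $\max_i|W_i|$), with the same rate and locality bookkeeping. Your extra care about the empty-want-set case and the passage to infima only tightens the write-up; no substantive difference from the paper.
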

\begin{proof}
Will will first prove the second inequality. Assume a valid index code with locality $r=1$ and message length $m$ for $G$. Since $r=1$ every receiver in $G$ observes exactly $m$ coded symbols to decode its demand. 
The $|W_i|$ receivers in $G$ corresponding to the user $u_i$ in $\Bc$ will together observe at the most $m|W_i|$ coded symbols to decode the $|W_i|$ messages $\xb_{i_1},\dots,\xb_{i_{|W_i|}}$, where $W_i=\{i_1,\dots,i_{|W_i|}\}$.
Thus, using the same index code for $\Bc$, $u_i$ needs to observe $m|W_i|$ codeword symbols to decode its demand $\xb_{W_i}$, yielding locality $r_i=1$.
This is true for every $i \in [n]$.
Hence any valid code for $G$ with $r=1$ is also a valid code for $\Bc$ with $r=1$. Therefore, $\beta_{\mathcal{B}}^\ast(1)\leq\beta_{G}^\ast(1)$.

Next we consider a valid code for $\Bc$ with $r=1$. 
Here $u_i$ uses $|R_i|=m|W_i|$ codeword symbols and the side information $\xb_{K_i}$ to decode its demand $\xb_{W_i}$.
Using the same index code in $G$, we note that each of the $|W_i|$ users in $G$ corresponding to $u_i$, can decode their respective demands from $m|W_i|$ codeword symbols and the common side information $\xb_{K_i}$. 
Since each of these receivers in $G$ demands exactly one message, their localities are equal to $m|W_i|/m = |W_i|$.
Considering all the receivers in $G$, the overall locality $r=\max_{i \in [n]}r_i$. Thus any index code with $r=1$ for $\Bc$ is also a valid index code for $G$ with $r=\max_i |W_i|$.
Therefore, $\beta_{\mathcal{B}}^\ast(1)\geq\beta_{G}^\ast(\max_i{|W_i|})$. 
\end{proof}

\subsection{The Three Receiver Unicast Problem $\Bfull$} \label{sec:Bfull}

In this paper we are interested in unicast problems with three or fewer receivers and where all messages are to be encoded at the same rate, i.e., we assume all message vectors $\xb_1,\dots,\xb_N$ have the same length.
We now consider a specific unicast index coding problem with \mbox{$n=3$} receivers whose bipartite graph will be denoted as $\Bfull$. 
Any other unicast problem with three or fewer receivers can be identified as a sub-problem of $\Bfull$.

Since we have \mbox{$n=3$}, the index set of all messages $[N]=W_1\cup W_2\cup W_3=W_i\cup K_i\cup Y_i$ for every $i \in [3]$, where $Y_i=[N] \setminus (W_i \cup K_i)$ is the index set of \emph{interference messages}, i.e., messages which are not demanded and are not known by $u_i$. 
For any $i\neq j$, we have $|W_i\cap W_j|=0$ and $W_i\subset (K_j\cup Y_j)$. 
Assuming \mbox{$i\neq j \neq k \neq i$}, the index set $W_i$ can be partitioned into the following $4$ disjoint subsets, \mbox{$W_i\cap K_j\cap K_k$}, \mbox{$W_i\cap Y_j\cap Y_k$}, \mbox{$W_i\cap K_j\cap Y_k$} and \mbox{$W_i\cap K_k\cap Y_j$}.
For example, any message with its index in the set \mbox{$W_i\cap K_j\cap Y_k$} is demanded by $u_i$, is known to $u_j$ as side information, and is neither wanted and nor known at $u_k$.
Since all messages with indices in \mbox{$W_i \cap K_j \cap Y_k$} can be viewed as one single message, we will assume that \mbox{$|W_i \cap K_j \cap Y_k|$} is either $0$ or $1$.
Thus, $|W_i| \leq 4$ for each $i \in [3]$, and in all, a three receiver unicast problem consists of $12$ disjoint subsets of messages, where the size of each subset is either $0$ or $1$.

\begin{figure}[t]
    \centering
    \includegraphics[width=3in]{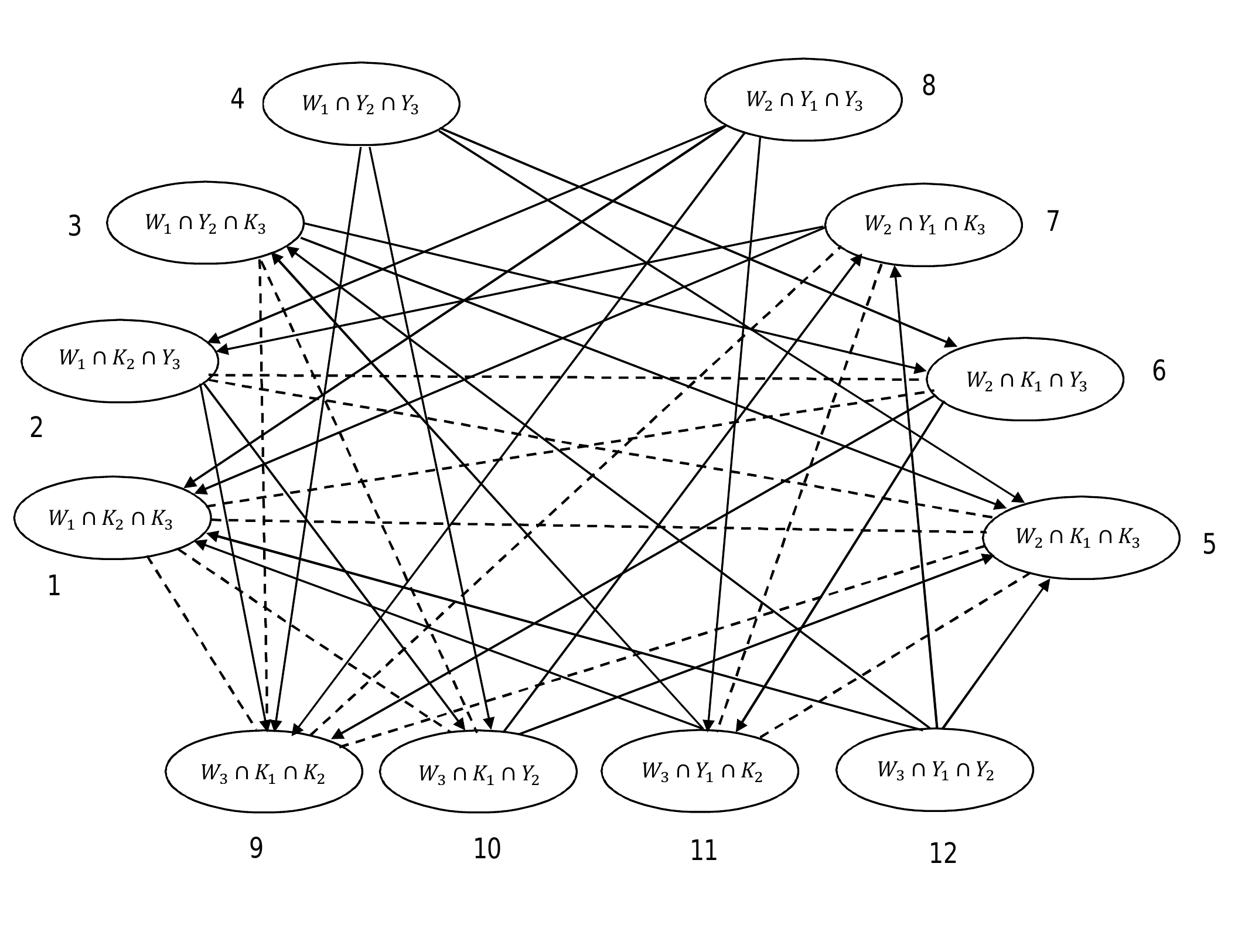}
    \vspace{-5mm}
    \caption{Side information graph $\Gfull$. Dashed lines represent two directed edges in either direction. For example, both $(1,5)$ and $(5,1)$ are edges in $\Gfull$.}
    \label{fig:side_graph}
\end{figure}

\begin{figure}[t]
    \centering
    \includegraphics[width=3in]{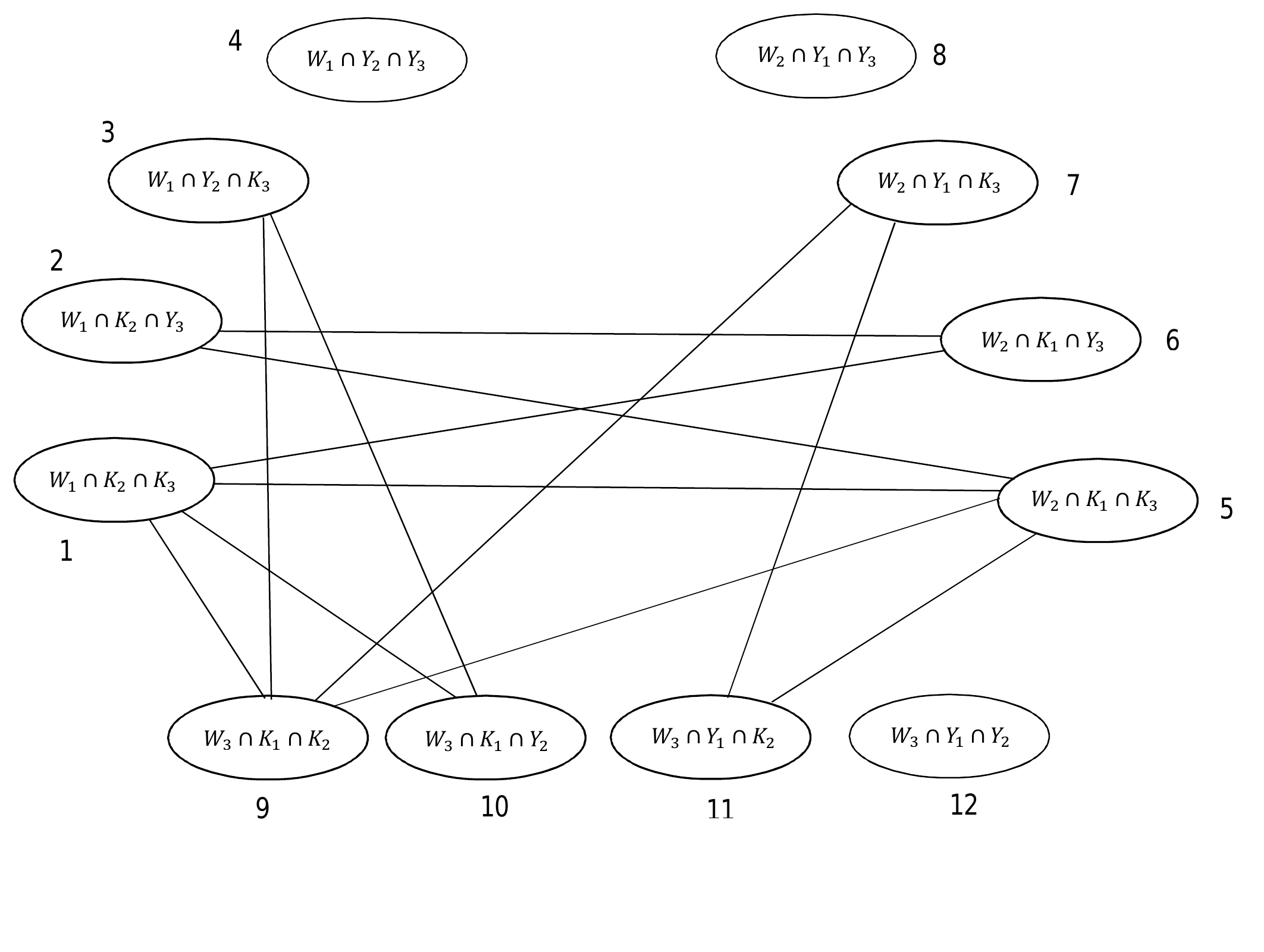}
    \vspace{-5mm}
    \caption{The underlying undirected side information graph $\Gufull$.}
    \label{fig:Gu}
\end{figure}

Now we consider the specific three receiver unicast problem $\Bfull$ where the size of each of the $12$ subsets of messages is equal to $1$, i.e., $N=12$.
Here each receiver demands 4 messages.
Let $\xb_{W_1}=(\xb_1,\xb_2,\xb_3,\xb_4$), $\xb_{W_2}=(\xb_5,\xb_6,\xb_7,\xb_8)$ and $\xb_{W_3}=(\xb_9,\xb_{10},\xb_{11},\xb_{12})$. 
The ESUP consists of $12$ receivers each demanding a unique message.
The side information graph $\Gfull$ of this ESUP and the underlying undirected graph $\Gufull$ are shown in Fig.~{\ref{fig:side_graph} and~\ref{fig:Gu}}, respectively. 
We represent each vertex or message interchangeably by the message index or by the subset corresponding to that index as shown in Fig.~\ref{fig:side_graph} and~\ref{fig:Gu}.
The construction of $\Gfull$ is as follows. 
Note that there are $4$ messages, and correspondingly $4$ vertices, in $\Gfull$ associated with each of $W_1,W_2,W_3$.
We represent each message or vertex of $\Gfull$ using the subset corresponding to that message. The subsets are of one of the following three types: \emph{(i)}~$W_i\cap K_j\cap Y_k$, \emph{(ii)}~$W_i\cap K_j\cap K_k$, \emph{(iii)}~$W_i\cap Y_j\cap Y_k$. 
The subset $W_i\cap K_j\cap Y_k$ corresponds to the message in $W_i$ which is known to all the $4$ receivers of $\Gfull$ corresponding to the $4$ subsets of $W_j$, and is an interference to all the $4$ receivers corresponding to the $4$ subsets of $W_k$.
Therefore the vertex $W_i\cap K_j\cap Y_k$ has incoming edges from all the $4$ vertices corresponding to the $4$ subsets of $W_j$, and no incoming edges from any other vertices.
For example, vertex $2$( which corresponds to $W_1\cap K_2\cap Y_3$) has incoming edges from vertices $5,6,7,8$ (these correspond to subsets of $W_2$). 
Similarly, the vertices of the type $W_i \cap K_j \cap K_k$ have $8$ incoming edges, and $W_i \cap Y_j \cap Y_k$ have no incoming edges.

For any two vertices $i$ and $j$ in $\Gfull$, $i,j \in [12]$, we say that there is a \emph{bidirectional edge} between $i$ and $j$ in $\Gfull$ if both $(i,j)$ and $(j,i)$ are edges in $\Gfull$. Note that there is a bidirectional edge between $i$ and $j$ in $\Gfull$ if and only if $\{i,j\}$ is an edge in $\Gufull$.

\begin{lemma}\label{lma:bidir_edges}
Let $(i,j,k)$ be any permutation of $(1,2,3)$. In the side information graph $\Gfull$, we have
\begin{enumerate}
\item[\emph{(i)}] $W_i\cap K_j\cap Y_k$ forms bidirectional edges with the subsets $W_j\cap K_i\cap Y_k$ and $W_j\cap K_i\cap K_k$; 
\item[\emph{(ii)}] $W_i\cap K_j\cap K_k$ forms bidirectional edges with the subsets $W_j\cap K_i\cap Y_k$ , $W_k\cap K_i\cap Y_j$,$W_j\cap K_i\cap K_k$ and $W_k\cap K_i\cap K_j$; 
\item[\emph{(iii)}] No bidirectional edge is incident on $W_i\cap Y_j\cap Y_k$.
\end{enumerate}  
\end{lemma}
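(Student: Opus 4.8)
The plan is to use the elementary observation that there is a bidirectional edge between two vertices $p$ and $q$ of $\Gfull$ precisely when $q$ lies in the out-neighbourhood of $p$ \emph{and} $q$ lies in the in-neighbourhood of $p$ (equivalently, $\{p,q\}\in\Ecp$, but I will work directly with $\Gfull$). So for each of the three types of vertex I would compute the out-neighbourhood and the in-neighbourhood separately and then intersect the two lists; the displayed assertions are exactly the entries of these intersections, and (iii) will be immediate because the in-neighbourhood there is empty.

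First I would pin down the out-neighbourhood of a vertex. A vertex of $\Gfull$ representing a message in $W_i$ is demanded by $u_i$, so its out-edges point exactly to the vertices whose messages belong to $K_i$. Running through the twelve subsets: none of the four subsets of $W_i$ lies in $K_i$ (since $W_i\cap K_i=\phi$); among the four subsets of $W_j$ only $W_j\cap K_i\cap K_k$ and $W_j\cap K_i\cap Y_k$ carry the label $K_i$ and hence lie in $K_i$, while $W_j\cap K_k\cap Y_i$ and $W_j\cap Y_i\cap Y_k$ are interference for $u_i$; symmetrically, among the four subsets of $W_k$ only $W_k\cap K_i\cap K_j$ and $W_k\cap K_i\cap Y_j$ lie in $K_i$. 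Hence every one of the four vertices comprising the subsets of $W_i$ has the \emph{same} out-neighbourhood, namely $\{\,W_j\cap K_i\cap K_k,\ W_j\cap K_i\cap Y_k,\ W_k\cap K_i\cap K_j,\ W_k\cap K_i\cap Y_j\,\}$.

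Next I would invoke the in-neighbourhoods already recorded in the construction of $\Gfull$ (they can also be read off Fig.~\ref{fig:side_graph}): $W_i\cap K_j\cap Y_k$ receives edges exactly from the four subsets of $W_j$; $W_i\cap K_j\cap K_k$ receives edges exactly from the four subsets of $W_j$ together with the four subsets of $W_k$; and $W_i\cap Y_j\cap Y_k$ receives no edges at all. Intersecting with the out-neighbourhood above: for $W_i\cap K_j\cap Y_k$ only $W_j\cap K_i\cap K_k$ and $W_j\cap K_i\cap Y_k$ survive, giving (i); for $W_i\cap K_j\cap K_k$ all four out-neighbours survive, two from the subsets of $W_j$ and two from the subsets of $W_k$, giving (ii); and for $W_i\cap Y_j\cap Y_k$ nothing survives, giving (iii).

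There is no genuine obstacle here beyond bookkeeping: the only place requiring care is the enumeration of $K_i$ among the twelve subsets, i.e.\ correctly identifying which two subsets of $W_j$ and which two subsets of $W_k$ carry the label $K_i$, and keeping the roles of the permutation $(i,j,k)$ straight when transcribing the answer. Since every subset has size at most one this is a finite verification that could in principle be done by inspection of Fig.~\ref{fig:side_graph}; phrasing it via the index sets $K_i$ and $Y_i$ merely makes the argument uniform over all six permutations $(i,j,k)$ at once.
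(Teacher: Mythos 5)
Your proposal is correct and follows the same route as the paper: compute the out-neighbourhood (all vertices lying in $K_i$) and the in-neighbourhood (all vertices demanded by a receiver that knows the given message), then intersect, with (iii) following because the in-neighbourhood of $W_i\cap Y_j\cap Y_k$ is empty. Your write-up is simply a more explicit version of the paper's two-line argument for each case.
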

\begin{proof}
\emph{(i)}~Vertex $W_i\cap K_j\cap Y_k$ has incoming edges from all $4$ vertices corresponding to the $4$ subsets of $W_j$ and has outgoing edges to all vertices that are subsets of $K_i$. 
Therefore $W_i\cap K_j\cap Y_k$ forms bidirectional edge with  the subsets $W_j\cap K_i\cap Y_k$ and $W_j\cap K_i\cap K_k$.

\emph{(ii)}~$W_i\cap K_j\cap K_k$  has incoming edges from all subsets of $W_j$ and $W_k$ and outgoing edges to all vertices that are subsets of $K_i$. Thus $W_i\cap K_j\cap K_k$ forms bidirectional edge with  the subsets $W_j\cap K_i\cap Y_k$, $W_k\cap K_i\cap Y_j$, $W_j\cap K_i\cap K_k$ and $W_k\cap K_i\cap K_j$

\emph{(iii)}~$W_i\cap Y_j\cap Y_k$ has no incoming edges since this message is not available as side information at any receiver.  
\end{proof}

\begin{theorem}\label{thm:Gu_perfect}
The undirected graph $\Gufull$ is perfect.
\end{theorem}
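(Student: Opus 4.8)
The plan is to invoke the Strong Perfect Graph Theorem (Theorem~\ref{thm:strong_perfect}): it suffices to show that $\Gufull$ contains no induced odd hole and no induced odd antihole. The first step is to read off the edge set of $\Gufull$ from Lemma~\ref{lma:bidir_edges}, using the fact that $\{i,j\}$ is an edge of $\Gufull$ exactly when there is a bidirectional edge between $i$ and $j$ in $\Gfull$. Write $a_i$ for the vertex $W_i\cap K_j\cap K_k$, $c_i$ for $W_i\cap Y_j\cap Y_k$, and $b_{ij}$ for $W_i\cap K_j\cap Y_k$, where $(i,j,k)$ is a permutation of $(1,2,3)$. Then Lemma~\ref{lma:bidir_edges} yields: $c_1,c_2,c_3$ are isolated; $\{a_1,a_2,a_3\}$ induces a triangle; $a_i$ is additionally adjacent to $b_{ji}$ and $b_{ki}$; and each $b_{ij}$ has degree exactly $2$, with neighbours $b_{ji}$ and $a_j$. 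Thus $\Gufull$ is the disjoint union of three isolated vertices and a $9$-vertex graph $H$ consisting of the triangle $a_1a_2a_3$ together with, for each pair $\{i,j\}$, a path $a_i-b_{ji}-b_{ij}-a_j$ of internal degree-$2$ vertices running ``parallel'' to the edge $a_ia_j$.

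Next I would rule out odd holes. Isolated vertices cannot lie on a cycle, so every induced cycle of $\Gufull$ lies inside $H$. Since each $b$-vertex has degree $2$ in $H$, whenever a $b$-vertex lies on an induced cycle its two cycle-neighbours are forced to be its two graph-neighbours; iterating this forces the whole length-$3$ parallel path and both of its endpoints $a_i,a_j$ onto the cycle. An induced cycle of length $\geq 4$ cannot contain all three of $a_1,a_2,a_3$ (the triangle edges would be chords), hence it contains at most two of them and can therefore use at most one parallel path. The only possibility is a single parallel path closed up by the corresponding triangle edge, namely $a_i-a_j-b_{ij}-b_{ji}-a_i$, which one checks has no chord and so is an induced $4$-cycle. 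Consequently every induced cycle of length $\geq 4$ is even, and $\Gufull$ has no odd hole.

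For odd antiholes, note that $\overline{C_5}\cong C_5$, so any odd antihole that is not already covered by the odd-hole case is $\overline{C_n}$ for some odd $n\geq 7$; such a graph has $n\geq 7$ vertices, each of degree $n-3\geq 4$, and in particular no isolated vertex, so it would have to be an induced subgraph of $H$. But in $H$ only the three vertices $a_1,a_2,a_3$ have degree exceeding $2$, and a vertex of degree $2$ in $H$ has degree at most $2$ in every induced subgraph of $H$; hence at most three vertices of any induced subgraph of $H$ can have degree $\geq 4$, contradicting the need for $\geq 7$ such vertices. So $\Gufull$ has no odd antihole, and by Theorem~\ref{thm:strong_perfect} it is perfect.

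The bulk of the work — and the place where a slip is most likely — is the case analysis for odd holes: one must apply the ``a degree-$2$ vertex forces its two neighbours onto the cycle'' argument carefully, and in particular verify that any attempt to build a longer induced cycle by chaining two or three parallel paths through the $a_i$'s unavoidably creates one of the chords $a_1a_2$, $a_1a_3$, $a_2a_3$. The remaining steps — extracting the edge set from Lemma~\ref{lma:bidir_edges} and the degree count for antiholes — are routine.
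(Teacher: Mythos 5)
Your proposal is correct and follows essentially the same approach as the paper: invoke the strong perfect graph theorem, read off from Lemma~\ref{lma:bidir_edges} that $\Gufull$ has three isolated vertices, six degree-$2$ vertices, and three degree-$4$ vertices forming a triangle, use the degree-$2$ structure to show every induced cycle of length $\geq 4$ is in fact a $4$-cycle, and rule out odd antiholes by counting vertices of degree $\geq 4$. Your parametrization of the vertices as $a_i,b_{ij},c_i$ and the ``parallel path'' framing make the odd-hole case slightly more explicit than the paper's terse version, but the underlying argument is the same.
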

\begin{proof}
We use Theorem~\ref{thm:strong_perfect} (the strong perfect graph theorem) to prove $\Gufull$ is perfect. 
By Lemma~\ref{lma:bidir_edges} and Fig.~\ref{fig:Gu}, we have the following observation: vertices $4,8,12$ have no edges incident on them, $2,3,6,7,10,11$ have degree $2$, and the remaining vertices $1,5,9$ have degree $4$ and they form a clique $C=\{1,5,9\}$. Note that every degree $2$ vertex is adjacent to another degree $2$ vertex and one vertex from $C$.

We first show that $\Gufull$ contains no odd hole. 
Towards that, clearly $4,8,12$ can not be a part of an odd hole since their degrees are zero. 
Suppose a degree two vertex $i$ is a part of a cycle, then the cycle must contain both the neighbors of this degree two vertex: a vertex $j$ of degree two and a vertex $k \in C$. 
Since any degree two vertex is adjacent to a vertex in $C$, there exists an $l \in C$ such that $j$ and $l$ are neighbors. Since $k,l \in C$, $k$ and $l$ are adjacent as well. 
Thus, we conclude that $i\!-j\!-l\!-k\!-i$ forms a cycle of length $4$. Thus, a cycle containing any vertex from $\{2,3,6,7,10,11\}$ can not be an odd hole. 
Clearly, the remaining three vertices $1,5,9$ form a cycle of length $3$, which is not an odd hole. 
  
We now show that $\Gufull$ does not contain an odd antihole.
Since the odd antihole of length $5$ is isomorphic to the odd hole of length $5$, and since we already know that $\Gufull$ does not contain any odd hole, we conclude that $\Gufull$ does not contain the length $5$ odd antihole.
Any odd antihole of length $7$ or more, contains at least $7$ vertices each of degree at least $4$\footnote{Every vertex in an antihole of size $n$ is adjacent to exactly $n-3$ vertices.}.
Since $\Gufull$ contains only three vertices with degree $4$ or more, we conclude that $\Gufull$ does not contain any odd antihole.
\end{proof}

For any $S \subseteq [12]$, let $\GfullS$ and $\GufullS$ be the subgraphs of $\Gfull$ and $\Gufull$, respectively, induced by the vertices $S$.
Note that $\GufullS$ is the underlying undirected graph corresponding to $\GfullS$.

\begin{figure}[t]
    \centering
    \includegraphics[width=3in]{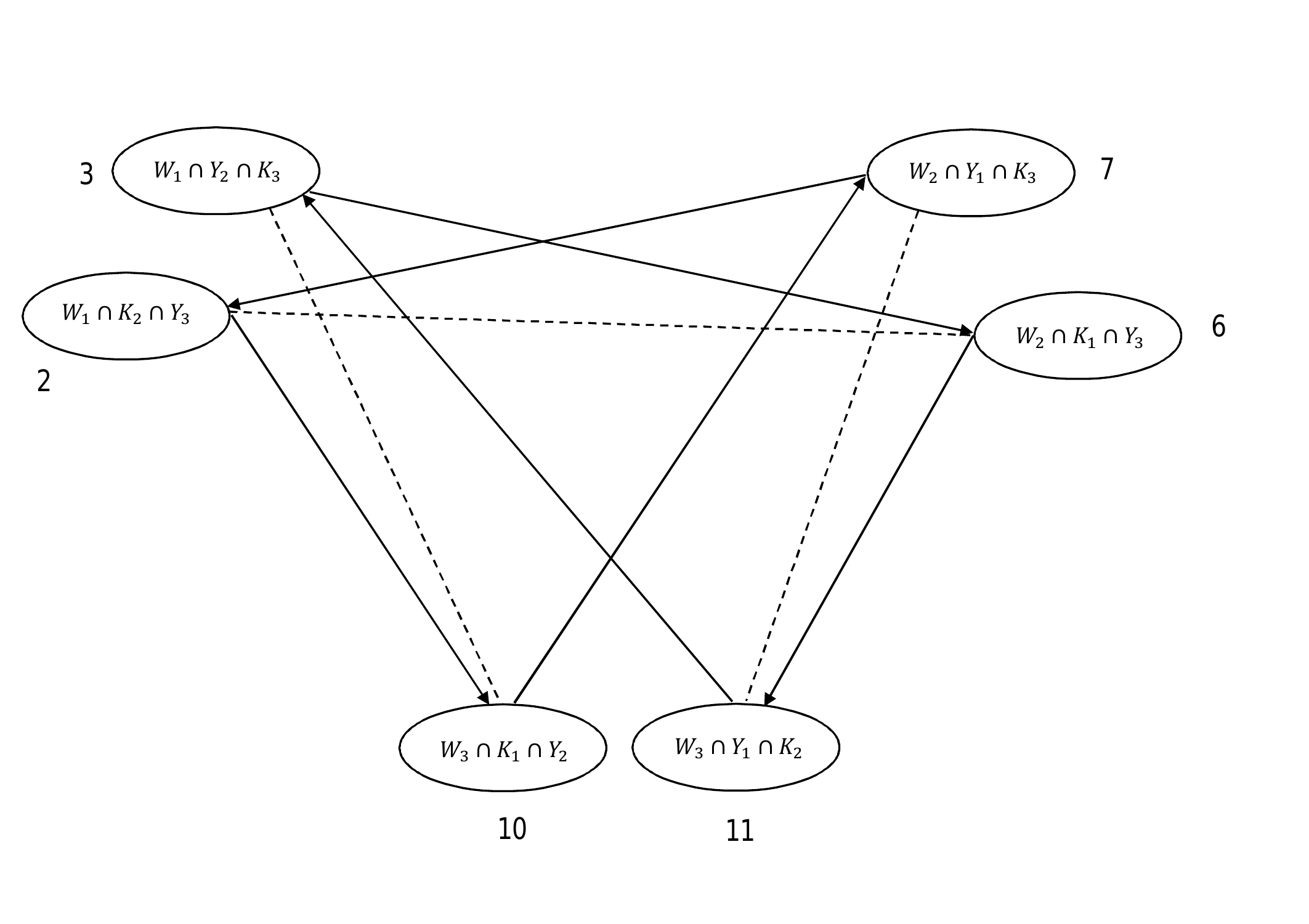}
    \vspace{-5mm}
    \caption{The two directed cycles (of length 3) in $\Gfull$ and the edges induced by these vertices. Dashed lines represent bidirectional edges.}
    \label{fig:2_dir_cycles}
\end{figure}
\begin{lemma}\label{lma:2_cycles_G}
If $S$ is an independent set in $\Gufull$ and $\GfullS$ is a directed cycle, then either $S=\{2,7,10\}$ or $S=\{3,6,11\}$.
\end{lemma}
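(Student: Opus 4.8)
The plan is to combine two facts: first, that $S$ being independent in $\Gufull$ forces $\GfullS$ to contain no bidirectional edges (equivalently, no length-$2$ cycles), so $\GfullS$ is an oriented graph; second, that $\GfullS$ being a directed cycle forces $|S|\geq 3$ together with a rigid cyclic pattern of one-way edges among the vertices of $S$. I would first argue that the vertices $4,8,12$ (which have no incoming edges in $\Gfull$, by Lemma~\ref{lma:bidir_edges}(iii) and the construction of $\Gfull$) cannot lie on any directed cycle, since a vertex on a directed cycle must have in-degree at least one within $\GfullS$; hence $S\subseteq\{1,2,3,5,6,7,9,10,11\}$.

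Next I would eliminate the clique vertices $1,5,9$ from $S$. By Theorem~\ref{thm:Gu_perfect}'s proof (or directly from Lemma~\ref{lma:bidir_edges}), $\{1,5,9\}$ is a clique in $\Gufull$, so $S$ contains at most one of them. If $S$ contains exactly one, say $1$ (the cases $5,9$ are symmetric under the cyclic relabelling of $(1,2,3)$), then the remaining vertices of $S$ lie in $\{2,3,6,7,10,11\}$; I would then trace which one-way edges of $\Gfull$ are available among $\{1\}$ and these degree-$2$-in-$\Gufull$ vertices, and check that no directed cycle through $1$ can be completed without reusing a bidirectional (hence forbidden) edge — this is essentially a finite case check using the incoming/outgoing edge description in Section~\ref{sec:Bfull}. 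So $S\subseteq\{2,3,6,7,10,11\}$.

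Finally, on the six vertices $\{2,3,6,7,10,11\}$ I would write down explicitly, from the rule that $W_i\cap K_j\cap Y_k$ receives edges exactly from the four subsets of $W_j$ and sends edges exactly to the subsets of $K_i$, which ordered pairs are one-way edges of $\Gfull$. The $W_i\cap K_j\cap Y_k$-type vertices here are: $2=W_1\cap K_2\cap Y_3$, $3=W_1\cap K_3\cap Y_2$, $6=W_2\cap K_1\cap Y_3$, $7=W_2\cap K_3\cap Y_1$, $10=W_3\cap K_1\cap Y_2$, $11=W_3\cap K_2\cap Y_1$. Checking adjacency: the bidirectional pairs among these are $\{2,6\}$, $\{3,10\}$(?), etc.—I would list them via Lemma~\ref{lma:bidir_edges}(i)—and the purely one-way directed edges run, for instance, $7\to 2$ (since $7\in W_2$, and $2=W_1\cap K_2\cap Y_3$ receives from all of $W_2$), $10\to 7$, $2\to 10$ (since $2$ sends to subsets of $K_1$, and $10\in W_3\cap K_1$), giving the directed triangle $2\to 10\to 7\to 2$, i.e. $S=\{2,7,10\}$; the mirror computation with the roles of $j,k$ swapped yields $3\to 6\to 11\to 3$, i.e. $S=\{3,6,11\}$. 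I would then confirm these two triples are genuinely independent in $\Gufull$ (no bidirectional edges among them, which follows again from Lemma~\ref{lma:bidir_edges}) and that any directed cycle on a subset of $\{2,3,6,7,10,11\}$ with no bidirectional edge must be exactly one of these two triangles, by checking that the one-way edge relation on these six vertices is precisely the disjoint union of these two $3$-cycles.

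The main obstacle is purely bookkeeping: correctly reading off the one-way (non-bidirectional) directed edges of $\Gfull$ among the candidate vertices from the construction rules, and making sure the elimination of $\{1,5,9\}$ and of longer cycles is exhaustive rather than merely suggestive. Symmetry under the cyclic permutation of $(1,2,3)$ (which permutes the twelve vertices and is an automorphism of $\Gfull$) should cut the casework down substantially, and I would lean on it to reduce the ``which one-way edges exist'' analysis to one representative configuration.
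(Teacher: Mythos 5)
Your proposal is correct and follows essentially the same route as the paper: independence of $S$ rules out bidirectional edges, vertices $4,8,12$ and $1,5,9$ are eliminated (the latter because all their outgoing edges are bidirectional, hence unusable), and the one-way edges among $\{2,3,6,7,10,11\}$ are exactly the two vertex-disjoint triangles $2\to 10\to 7\to 2$ and $3\to 6\to 11\to 3$. The bookkeeping you defer does check out (in particular $\{3,10\}$ is indeed a bidirectional pair), so your finite case check would close the argument just as the paper's inspection of its figures does.
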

\begin{proof}
Since $S$ is an independent set in $\Gufull$,  
it is clear that there are no bidirectional edges in $\GfullS$, and the vertices in $S$ are connected by unidirectional edges only. 
Thus, in the following, we do not consider bidirectional edges present in $\Gfull$.
Vertices that have either only incoming or only outgoing edges can not be part of a cycle, i.e., from Fig.~\ref{fig:side_graph}, $1,4,5,8,9,12 \notin S$.
The rest of the vertices and the edges among them are shown in Fig.~\ref{fig:2_dir_cycles}. 
From Fig.~\ref{fig:2_dir_cycles}, clearly there are exactly $2$ cycles of length $3$ contained by these vertices, viz.\ $(2,10,7)$ and $(3,6,11)$.
\end{proof}

We know from Lemma~\ref{lma:mais_ind_set}, that $\mais(\GfullS) \leq \alpha(\GufullS)$ for any choice of $S$.
We now characterize subsets $S$ for which the maximum acyclic induced subgraph of $\GfullS$ is strictly smaller than the independence number of $\GufullS$. 
This result will be used in converse arguments in the next section. 
Note that the vertices $4,8,12$ are of degree zero in $\Gufull$. 

\begin{theorem}\label{thm:mais_less_alpha}
For any \mbox{$S \subseteq [12]$} and \mbox{$I=\{4,8,12\}\cap S$},
\mbox{$\mais(\GfullS)< \alpha(\GufullS)$} if and only if either \mbox{$S=\{2,10,7\}\cup I$} or \mbox{$S=\{3,6,11\}\cup I$}. 
\end{theorem}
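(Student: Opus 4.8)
\emph{Strategy.} I would first discard the isolated vertices $4,8,12$, then prove a strengthening of Lemma~\ref{lma:2_cycles_G} giving a clean criterion for when an independent set induces a cyclic subdigraph, and finally run a short finite analysis of the nine remaining vertices using the remark after Lemma~\ref{lma:mais_ind_set}.

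\emph{Reduction.} The vertices $4,8,12$ are isolated in $\Gufull$ (see the proof of Theorem~\ref{thm:Gu_perfect}) and have no incoming edges in $\Gfull$. For any $S$ put $I=S\cap\{4,8,12\}$ and $S_0=S\setminus I\subseteq V_0:=\{1,2,3,5,6,7,9,10,11\}$. An isolated vertex may be added to any independent set, and a vertex with no incoming edge lies on no directed cycle and so may be added to any acyclic induced subdigraph; hence $\alpha(\GufullS)=\alpha(\Gufull_{S_0})+|I|$ and $\mais(\GfullS)=\mais(\Gfull_{S_0})+|I|$. Consequently $\mais(\GfullS)<\alpha(\GufullS)$ iff $\mais(\Gfull_{S_0})<\alpha(\Gufull_{S_0})$, and since $S=S_0\cup I$ it suffices to show that for $S_0\subseteq V_0$ this holds iff $S_0\in\{\{2,7,10\},\{3,6,11\}\}$.

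\emph{A cycle criterion.} I would prove: for every independent set $S'$ of $\Gufull$, the digraph $\Gfull_{S'}$ contains a directed cycle iff $\{2,7,10\}\subseteq S'$ or $\{3,6,11\}\subseteq S'$. ``If'' is immediate since $\{2,7,10\}$ and $\{3,6,11\}$ are independent in $\Gufull$ and each induces a directed $3$-cycle in $\Gfull$. For ``only if'' I would run the argument from the proof of Lemma~\ref{lma:2_cycles_G}: since $S'$ is independent, $\Gfull_{S'}$ has no bidirectional edges; the vertices $1,4,5,8,9,12$ each carry only incoming or only outgoing unidirectional edges and so lie on no directed cycle; and the unidirectional edges of $\Gfull$ among the remaining vertices $\{2,3,6,7,10,11\}$ are precisely the six edges forming the two vertex-disjoint directed $3$-cycles $(2,10,7)$ and $(3,6,11)$ of Fig.~\ref{fig:2_dir_cycles}, with no further edges; hence any directed cycle of $\Gfull_{S'}$ is one of these two. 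Combining this with the remark after Lemma~\ref{lma:mais_ind_set}: for $S_0\subseteq V_0$, $\mais(\Gfull_{S_0})<\alpha(\Gufull_{S_0})$ iff every maximum independent set of $\Gufull_{S_0}$ contains $\{2,7,10\}$ or $\{3,6,11\}$.

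\emph{The nine-vertex analysis.} From the proof of Theorem~\ref{thm:Gu_perfect}, $\Gufull_{V_0}$ is the triangle $\{1,5,9\}$ together with three vertex-disjoint ``ears'' $5\!-\!2\!-\!6\!-\!1$, $9\!-\!3\!-\!10\!-\!1$, $9\!-\!7\!-\!11\!-\!5$. From this one reads off two facts: (a)~$\alpha(\Gufull_{V_0})=4$, every maximum independent set consists of one vertex of $\{1,5,9\}$ and three of the six degree-two vertices, and a direct enumeration of the six such sets shows that none contains $\{2,7,10\}$ or $\{3,6,11\}$; (b)~every vertex of $V_0\setminus\{2,7,10\}$ is adjacent in $\Gufull$ to exactly one vertex of $\{2,7,10\}$, and symmetrically for $\{3,6,11\}$, the symmetry being the automorphism $\sigma=(2\,3)(5\,9)(6\,10)(7\,11)(8\,12)$ of $\Gfull$ induced by transposing users $2$ and $3$, which swaps $\{2,7,10\}$ and $\{3,6,11\}$. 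Now suppose $S_0\subseteq V_0$ and every maximum independent set of $\Gufull_{S_0}$ contains a triple; in particular $S_0$ contains a triple, so by $\sigma$ we may assume $\{2,7,10\}\subseteq S_0$ and $\alpha(\Gufull_{S_0})\ge 3$. If $\alpha(\Gufull_{S_0})=4$ then a maximum independent set has size $4$ and by (a) contains no triple, a contradiction; so $\alpha(\Gufull_{S_0})=3$, whence every maximum independent set has size $3$ and contains a triple, i.e.\ equals $\{2,7,10\}$ or $\{3,6,11\}$. If $S_0$ properly contained $\{2,7,10\}$, choose $v\in S_0\setminus\{2,7,10\}$; by (b) $v$ is adjacent to exactly one $a\in\{2,7,10\}$, so $(\{2,7,10\}\setminus\{a\})\cup\{v\}$ is a maximum independent set of $\Gufull_{S_0}$ equal to neither triple, a contradiction. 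Hence $S_0=\{2,7,10\}$. Conversely, if $S_0\in\{\{2,7,10\},\{3,6,11\}\}$ then $\Gufull_{S_0}$ has no edges, so its unique maximum independent set is $S_0$ itself, which induces a directed $3$-cycle in $\Gfull$; thus $\mais(\Gfull_{S_0})=2<3=\alpha(\Gufull_{S_0})$. The only step requiring real care is the finite, labelling-sensitive verification in the cycle criterion (that the unidirectional edges among $\{2,3,6,7,10,11\}$ are exactly the two $3$-cycles) together with facts (a) and (b); everything else follows from Lemmas~\ref{lma:bidir_edges},~\ref{lma:2_cycles_G} and the remark after Lemma~\ref{lma:mais_ind_set}.
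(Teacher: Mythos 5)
Your proposal is correct — I verified the labelling-dependent facts it rests on: the unidirectional edges of $\Gfull$ among $\{2,3,6,7,10,11\}$ are exactly the two disjoint directed $3$-cycles $(2,10,7)$ and $(3,6,11)$; every vertex of $\{1,3,5,6,9,11\}$ has exactly one $\Gufull$-neighbour in $\{2,7,10\}$ (and, via your user-swap automorphism $\sigma$, the analogous statement for $\{3,6,11\}$); and the six size-$4$ independent sets of the nine-vertex graph are $\{1,2,3,7\}$, $\{1,2,3,11\}$, $\{3,5,6,7\}$, $\{5,6,7,10\}$, $\{2,9,10,11\}$, $\{6,9,10,11\}$, none of which contains either triple. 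Your argument shares the paper's two key ingredients: the classification of directed cycles induced by independent sets (your ``cycle criterion'' is the ``contains a cycle'' strengthening of Lemma~\ref{lma:2_cycles_G}; the paper needs and uses this stronger form implicitly, both when it extracts a triple from a cyclic $\GfullM$ and when it concludes $\GfullMdash$ is acyclic, so stating it explicitly is a small improvement) and the exactly-one-neighbour swap that manufactures a triple-free maximum independent set. The packaging differs: you peel off the isolated/source vertices $\{4,8,12\}$ at the outset and recast the strict inequality, via the remark after Lemma~\ref{lma:mais_ind_set}, as ``every maximum independent set contains a triple,'' which then obliges you to exclude $\alpha=4$ by the enumeration in your fact (a); the paper instead fixes one maximum independent set $M$ of $\GufullS$ containing a triple and pins it down to $\{2,7,10\}\cup I$ directly from independence plus maximality (carrying $I$ along rather than removing it), so no enumeration is needed — indeed, in your setup fact (b) alone already forces any maximum independent set containing $\{2,7,10\}$ to equal $\{2,7,10\}$, so step (a) could be dropped. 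What your route buys is a cleaner, modular statement of the reduction and the cycle criterion, and a genuine symmetry argument in place of the paper's ``the proof of the other case is similar.''
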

\begin{proof}
\emph{The `only if' part:}
Let \mbox{$\mais(\GfullS)< \alpha(\GufullS)$}. 
Then there exists a maximum-sized independent set \mbox{$M \subseteq S$} of $\Gufull$ such that the $\GfullM$ contains a cycle and $|M| = \alpha(\GufullS)$. 
From Lemma~\ref{lma:2_cycles_G}, $\{2,10,7\} \subseteq M$ or $\{3,6,11\} \subseteq M$.
Also, exactly one of the following holds: either $\{2,10,7\} \subseteq M$ or $\{3,6,11\} \subseteq M$ since $M$ is independent in $\Gufull$ and $\{2,6\}$, $\{3,10\}$, $\{7,11\}$ are edges in $\Gufull$.
In the rest of the proof, we will assume $\{2,10,7\} \subseteq M$, the proof of the other case is similar.

Since $M$ is independent in $\Gufull$ and $2,10,7 \in M$, from Lemma~\ref{lma:2_cycles_G} and Fig.~\ref{fig:2_dir_cycles}, we conclude that $1,3,5,6,9,11\notin M$, since each of these vertices is adjacent to at least one vertex among $\{2,10,7\}$ in $\Gufull$.
Since $I \subseteq \{4,8,12\}$, no vertex in $I$ is adjacent to $2,7,10$. 
Considering $I \subseteq S$ and the maximality of $M$, we conclude that $M = \{2,10,7\} \cup I$. 
Now, since $\{2,7,10\}$ and $I$ are disjoint, $\alpha(\GufullS) = |M| = 3 + |I|$. From the hypothesis of this theorem, we conclude 
\begin{equation} \label{eq:mais_2+I}
\mais(\GfullS) \, \leq \, \alpha(\GufullS) - 1 \, = \, 2 + |I|.
\end{equation} 

To complete the proof, we will show that $M=S$, i.e., show that $1,3,5,6,9,11 \notin S$. Will prove this by showing a contradiction. Suppose there exists an $i \in \{1,3,5,6,9,11\}$ such that $i \in S$. It can be verified by direct inspection (see Fig.~\ref{fig:Gu}) that every vertex in $\{1,3,5,6,9,11\}$ is adjacent in $\Gufull$ to exactly one vertex in $\{2,7,10\}$. Suppose $j \in \{2,7,10\}$ is adjacent to $i$ in $\Gufull$.
Clearly, 
\begin{equation*}
M' = I \cup \{2,7,10,i\}\!\setminus\!\{j\}=M \cup \{i\} \setminus \{j\} 
\end{equation*} 
is an independent set in $\Gufull$ and $\GufullS$, $|M'|=M=3+|I|$ and $M' \subseteq S$.
Also, neither $\{2,7,10\}$ nor $\{3,6,11\}$ are subsets of $M'$. Hence, from Lemma~\ref{lma:2_cycles_G}, $\GfullMdash$ does not contain cycles. Thus,
$\mais(\GfullS) \geq |M'| = 3 + |I|$, contradicting~\eqref{eq:mais_2+I}. 

\emph{The `if' part:} Suppose $S=\{2,7,10\} \cup I$, where $I \subseteq \{4,8,12\}$. The proof for $S=\{3,6,11\} \cup I$ is similar. From Fig.~\ref{fig:Gu}, it is clear that $\GufullS$ has no edges, i.e., $\alpha(\GufullS) = |S|$. From Fig.~\ref{fig:side_graph}, we observe that $\GfullS$ contains exactly one cycle $(2,10,7)$. Thus, $\mais(\GfullS) = |S|-1 < \alpha(\GufullS)$.
\end{proof}


\begin{corollary} \label{cor:mais_less_alpha}
If $S \subseteq [12]$ is such that $\mais(\GfullS) < \alpha(\GufullS)$, then $S$ is an independent set in $\Gufull$ and $\bar{\chi}(\GufullS) = |S|$.
\end{corollary}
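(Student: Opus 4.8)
The plan is to obtain the corollary as an essentially immediate consequence of Theorem~\ref{thm:mais_less_alpha}, which already enumerates the possible sets $S$. First I would invoke that theorem: the hypothesis $\mais(\GfullS) < \alpha(\GufullS)$ forces $S = \{2,10,7\} \cup I$ or $S = \{3,6,11\} \cup I$, where $I = \{4,8,12\} \cap S$. Hence it suffices to verify both conclusions for these two families of sets.

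Next I would argue that $S$ is an independent set in $\Gufull$. The vertices $4,8,12$ have degree zero in $\Gufull$ (noted just before Theorem~\ref{thm:mais_less_alpha}), so they are adjacent to nothing. For the triple $\{2,7,10\}$ — and symmetrically $\{3,6,11\}$ — the proof of Lemma~\ref{lma:2_cycles_G} (or direct inspection of Fig.~\ref{fig:Gu}) shows these vertices carry no bidirectional edges in $\Gfull$; indeed they were identified there as inducing a directed $3$-cycle using unidirectional edges only. Since $\{i,j\}$ is an edge of $\Gufull$ exactly when there is a bidirectional edge between $i$ and $j$ in $\Gfull$, no two vertices of $\{2,7,10\}$ (resp.\ $\{3,6,11\}$) are adjacent in $\Gufull$. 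Combining these observations, $\GufullS$ has no edges whatsoever, i.e., $S$ is an independent set in $\Gufull$.

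Finally, since $\GufullS$ is edgeless on $|S|$ vertices, every clique of $\GufullS$ is a single vertex, so a clique cover must use all $|S|$ vertices as singleton cliques and cannot do better; thus $\bar{\chi}(\GufullS) = |S|$. There is no real obstacle here: the statement is a bookkeeping consequence of Theorem~\ref{thm:mais_less_alpha} together with the fact that the two extremal triples consist of mutually non-adjacent vertices of $\Gufull$. The only point requiring slight care is citing the correct earlier fact — namely the absence of bidirectional edges among $\{2,7,10\}$ and among $\{3,6,11\}$, which is exactly what was used in Lemma~\ref{lma:2_cycles_G}.
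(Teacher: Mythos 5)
Your proposal is correct and follows essentially the same route as the paper: invoke Theorem~\ref{thm:mais_less_alpha} to pin $S$ down to $\{2,7,10\}\cup I$ or $\{3,6,11\}\cup I$, observe (as in the `if' part of that theorem) that $\GufullS$ is then edgeless, and conclude $\bar{\chi}(\GufullS)=|S|$. One small citation nuance: Lemma~\ref{lma:2_cycles_G} itself does not establish that the triples are independent in $\Gufull$ (it assumes independence as a hypothesis), but your fallback of direct inspection of Fig.~\ref{fig:Gu}, together with the degree-zero vertices $4,8,12$, is exactly what the paper relies on, so the argument stands.
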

\begin{proof}
Following the proof of the `if' part of Theorem~\ref{thm:mais_less_alpha}, we observe that $S$ is an independent set of $\Gufull$. Consequently, $\GufullS$ has no edges, and hence, $\bar{\chi}(\GufullS) = |S|$.
\end{proof}

\section{Optimal Minimum-Locality Index Codes for Three or Fewer Receivers} \label{sec:main_result}

We will first derive the optimal length of index codes with locality $r=1$ for the unicast index coding problem $\Bfull$, as described in Section~\ref{sec:Bfull}, and then consider all unicast problems with three or fewer receivers. 

\subsection{Minimum-locality index code for $\Bfull$}

Recall that $\Gfull$ is the side information graph of the ESUP of $\Bfull$. 
From the proof of Theorem~\ref{thm:U_ESU}, any valid index code with locality one for $\Gfull$ is also valid for $\Bfull$ with locality one. The optimal broadcast rate with $r=1$ for $\Gfull$ is $\beta_{\Gfull}^*(1) = \bar{\chi}_f(\Gufull)$~\cite{NKL_ISIT18,HaL_ISIT12}. We show that $\beta_{\Bfull}^*(1)$ too is equal to this value.

\begin{theorem}
\label{thm:3_Rxs_sp_case}
For the three receiver unicast problem $\Bfull$ (described in Section~\ref{sec:Bfull}), with the ESUP $\Gfull$ and underlying undirected graph $\Gufull$, 
$\beta_{\Bfull}^\ast(1)= \bar{\chi}_f(\Gufull) = \bar{\chi}(\Gufull)=7$.
\end{theorem}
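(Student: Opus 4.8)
The plan is to establish the chain of equalities in three stages. First, I would compute $\bar\chi(\Gufull)$ directly from the explicit structure of $\Gufull$ recorded in the proof of Theorem~\ref{thm:Gu_perfect}: the three degree-$4$ vertices $1,5,9$ form a triangle, each degree-$2$ vertex is adjacent to exactly one vertex of $C=\{1,5,9\}$ and to exactly one other degree-$2$ vertex, and $4,8,12$ are isolated. A natural clique cover is obtained by pairing each isolated vertex and each degree-$2$ vertex with a partner; concretely one groups $\{1,5,9\}$ is a triangle (or is split), the three degree-$2$/degree-$2$ adjacent pairs give three edges, and the isolated vertices must each be their own singleton clique — yielding an upper bound of $3+3+1 = 7$ after some bookkeeping (the triangle $C$ can absorb one degree-$2$ neighbour of each of its vertices into a $2$-clique, but then $1,5,9$ are used up and still need covering, so the count lands at $7$). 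Since $\Gufull$ is perfect (Theorem~\ref{thm:Gu_perfect}), $\bar\chi_f(\Gufull)=\bar\chi(\Gufull)$, and by \eqref{eq:rate_clique_covering} this common value equals $\beta_{\Gfull}^*(1)$; I would also verify the matching lower bound $\alpha(\Gufull)\ge 7$ by exhibiting an explicit independent set of size $7$ (e.g.\ the three isolated vertices together with one degree-$2$ vertex from each of the ``opposite'' pairs, chosen to avoid the triangle), so that $\alpha=\bar\chi_f=\bar\chi=7$ by perfectness.

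Second, the inequality $\beta_{\Bfull}^*(1)\le\bar\chi_f(\Gufull)$ is essentially free: by Theorem~\ref{thm:U_ESU} (second inequality), $\beta_{\Bfull}^*(1)\le\beta_{\Gfull}^*(1)=\bar\chi_f(\Gufull)=7$, since a locality-one code for the ESUP $\Gfull$ is a locality-one code for $\Bfull$. So the entire content of the theorem is the converse $\beta_{\Bfull}^*(1)\ge 7$.

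Third — and this is where I expect the real work to be — I would prove the converse by an information-theoretic cut argument applied directly to $\Bfull$ (not to $\Gfull$). Fix any valid locality-one index code for $\Bfull$ with message length $m$ and codeword length $\ell$; receiver $u_i$ reads $|R_i| = m|W_i|$ symbols. The standard technique is: for a set $S$ of message indices such that the induced directed problem is ``acyclic-like,'' the $m|S|$ message symbols indexed by $S$ must all be determined by the codeword symbols that the relevant receivers observe, giving $\ell \ge |S|\cdot m$ up to the acyclicity bound — i.e.\ one wants a $\mais$-type lower bound but sharpened to the undirected independence number using the locality-one hypothesis. The key leverage is Lemma~\ref{lma:mais_ind_set} and especially Theorem~\ref{thm:mais_less_alpha}/Corollary~\ref{cor:mais_less_alpha}: for $\Gfull$, $\mais(\GfullS)=\alpha(\GufullS)$ for every $S$ except the two bad triples $\{2,7,10\}$ and $\{3,6,11\}$ (modulo isolated vertices), and on those exceptional sets $\bar\chi(\GufullS)=|S|$ anyway. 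The argument would run: take $S$ a maximum independent set of $\Gufull$ of size $7$; if $\GfullS$ is acyclic we get $\ell\ge 7m$ immediately; if not, Lemma~\ref{lma:2_cycles_G} pins down the obstruction to one of the two triples, and one shows that in $\Bfull$ the locality-one requirement still forces enough distinct codeword symbols — here one exploits that a receiver $u_i$ with $|W_i|=4$ reading exactly $4m$ symbols must have those $4m$ symbols functionally reconstruct $\xb_{W_i}$ given only $\xb_{K_i}$, so the counting is tight and the ``extra'' cycle among interference-free messages cannot be used to save a dimension. The main obstacle is making this last step rigorous at the level of general (possibly non-linear) codes: I would use a conditional-entropy chain-rule argument, bounding $\ell m \log|\Ac| = H(\cb) \ge H(\xb_S \mid \text{appropriate side information})$ and peeling off one message at a time along an acyclic ordering of $\GfullS$, treating the exceptional triples separately via the $\bar\chi(\GufullS)=|S|$ fact, which guarantees that even there the independent set of size $7$ certifies $\ell\ge 7m$. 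Combining the two inequalities gives $\beta_{\Bfull}^*(1)=7$.
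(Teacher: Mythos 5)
Your achievability argument matches the paper exactly: $\Gufull$ is perfect, $\bar\chi(\Gufull)=7$ by the explicit clique cover $\{1,5,9\},\{2,6\},\{3,10\},\{7,11\},\{4\},\{8\},\{12\}$, and the second inequality of Theorem~\ref{thm:U_ESU} together with~\eqref{eq:rate_clique_covering} gives $\beta_{\Bfull}^*(1)\le 7$. (Small slip: your proposed independent set ``three isolated vertices plus one degree-$2$ vertex from each adjacent pair'' has size $6$, not $7$; you need to also include one triangle vertex, e.g.\ $\{1,2,3,4,7,8,12\}$.)

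The converse is where your proposal goes astray: it is much simpler than you anticipate, and your hedge ``if $\GfullS$ is not acyclic, then~\ldots'' plus the subsequent entropy/chain-rule plan is a red herring for this theorem. For the full problem $\Bfull$ (i.e.\ $S=[12]$), there \emph{does} exist a maximum independent set of $\Gufull$ that induces an acyclic directed subgraph of $\Gfull$ --- namely $S=\{1,2,3,4,7,8,12\}$, whose induced subgraph is acyclic because vertex $7$ is the only vertex with both an incoming and an outgoing edge. Hence $\mais(\Gfull)\ge 7$, and combined with Lemma~\ref{lma:mais_ind_set} this yields $\mais(\Gfull)=\alpha(\Gufull)=7$. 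The converse is then just the standard chain $\beta_{\Bfull}^*(1)\ge\beta_{\Gfull}^*(\max_i|W_i|)\ge\betaoptGfull\ge\mais(\Gfull)=7$, using the \emph{first} inequality of Theorem~\ref{thm:U_ESU}. You had all the tools to see this: applying Theorem~\ref{thm:mais_less_alpha} with $S=[12]$ already tells you $\mais(\Gfull)=\alpha(\Gufull)$ since $[12]$ is not one of the exceptional triples; instead you treat the non-acyclic case as live and sketch an information-theoretic fallback. That fallback is not only unnecessary but ill-posed here --- the $|R_i\cap R_j|=0$ argument (the paper's Lemma~\ref{lma:intersection_R}) applies only when $S$ itself is an independent set, via Corollary~\ref{cor:mais_less_alpha}, which fails for $S=[12]$ --- and it contains a dimension error ($H(\cb)\le\ell\log|\Ac|$, not $\ell m\log|\Ac|$). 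In short: achievability is right, but your converse conflates the easy Theorem~\ref{thm:3_Rxs_sp_case} with the genuinely harder exceptional case of Theorem~\ref{thm:main_all_problems}, and misses the one-line MAIS observation that closes the proof.
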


\subsubsection{Proof of achievability for Theorem~\ref{thm:3_Rxs_sp_case}} \label{sec:3_Rxs_sp_case_achieve}

We know that $\Gufull$ is perfect (Theorem~\ref{thm:Gu_perfect}), and hence, $\bar{\chi}_f(\Gufull) = \bar{\chi}(\Gufull)$. The value $\bar{\chi}_f(\Gufull)=7$ can be verified numerically. Using~\eqref{eq:rate_clique_covering} and Theorem~\ref{thm:U_ESU}, we immediately deduce
\begin{equation*}
\beta_{\Bfull}^*(1) \leq \beta_{\Gfull}^*(1) = \bar{\chi}_f(\Gufull) = \bar{\chi}(\Gufull)=7.
\end{equation*} 

\subsubsection{Proof of converse for Theorem~\ref{thm:3_Rxs_sp_case}} \label{sec:3_Rxs_sp_case_converse}

\begin{figure}[t]
    \centering
    \includegraphics[width=2.75in]{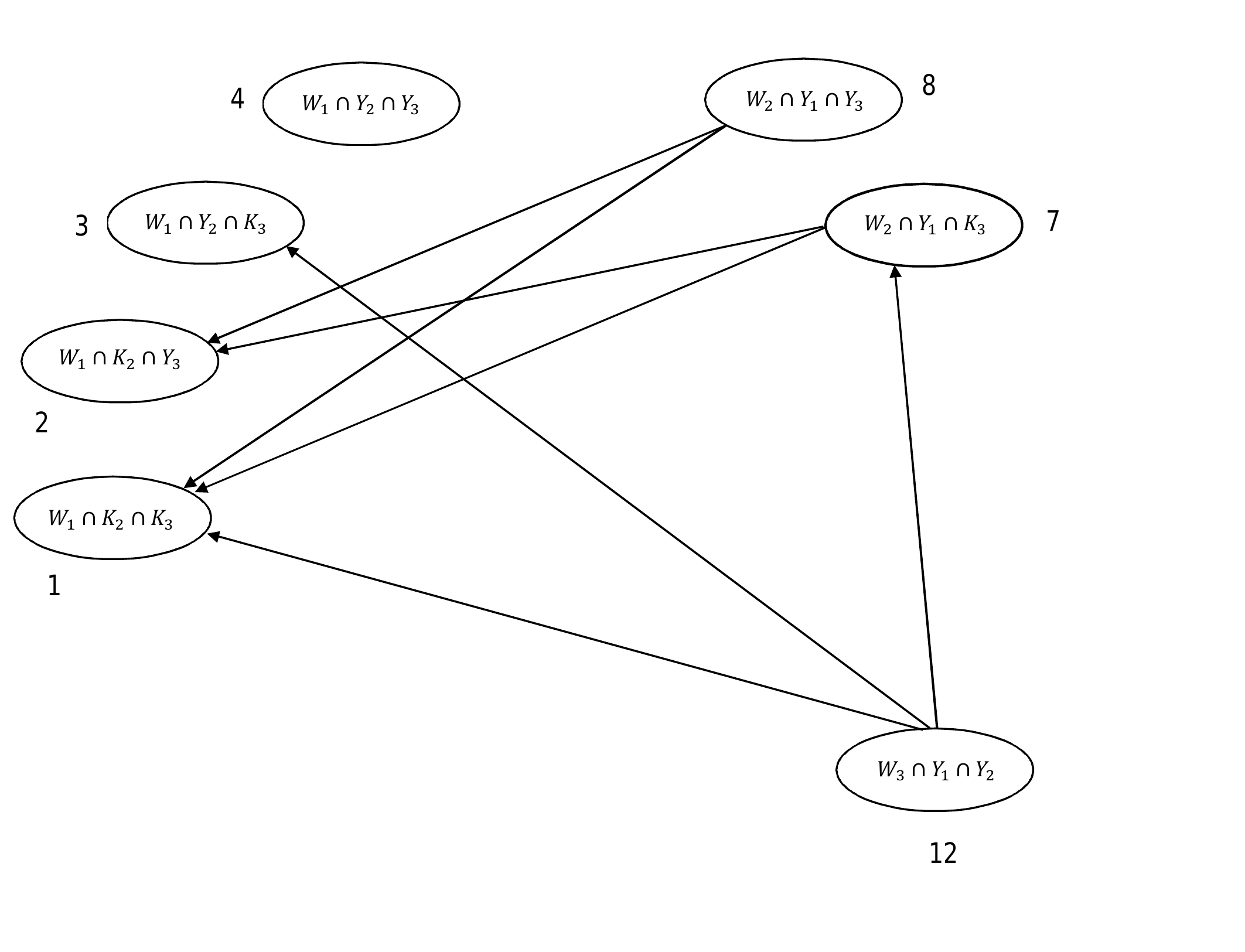}
    \vspace{-5mm}
    \caption{Acyclic subgraph of $\Gfull$ induced by $S=\{1,2,3,4,7,8,12\}.$}
    \label{fig:mais_equal_ind}
\end{figure}

From Theorem \ref{thm:U_ESU}, $\beta_{\Bfull}^*(1) \geq \beta_{\Gfull}^*(\max_i|W_i|)$. 
We know that $\beta_{\Gfull}^*(\max_i|W_i|)\geq \betaoptGfull$, the optimum broadcast rate of the ESUP $\Gfull$. Further, we know that $\betaoptGfull \geq \mais(\Gfull)$. Thus, we have 
\begin{equation} \label{eq:thm:mais_lower_bound}
 \beta_{\Bfull}^*(1) \geq \mais(\Gfull).
\end{equation} 
Now we prove that \mbox{$\mais(\Gfull)=\alpha(\Gufull)$}.
Since $\Gufull$ is perfect, we know that $\alpha(\Gufull)=\bar{\chi}(\Gufull)=7$. 
Consider the directed subgraph of $\Gfull$ induced by $S=\{1,2,3,4,7,8,12\}$, as shown in Fig.~\ref{fig:mais_equal_ind}. 
Since there is only one vertex in the subgraph (vertex 7) with both incoming and outgoing edges, it is clear that this subgraph is acyclic.
Thus, we have $\mais(\Gfull) \geq |S| = 7$.
On the other hand, from Lemma~\ref{lma:mais_ind_set}, we have $\mais(\Gfull) \leq \alpha(\Gufull) = 7$, thus yielding $\mais(\Gfull)=7$.
The converse follows by combining this result with~\eqref{eq:thm:mais_lower_bound}.

\subsection{Minimum-locality codes for all three receiver unicast index coding problems}

From the discussion in Section~\ref{sec:Bfull} we know that any unicast problem involving three (or fewer) receivers consists of $12$ disjoint subsets of messages, with the size of each of these subsets being either $0$ or $1$. 
In other words, with the messages $\xb_1,\dots,\xb_{12}$ as defined in Section~\ref{sec:Bfull} and Fig.~\ref{fig:side_graph}, any three receiver unicast problem is a subproblem of $\Bfull$ induced by a subset $S \subseteq [12]$ of the messages. 
Thus, any three receiver unicast problem can be uniquely identified by its corresponding set of messages $S$.
The bipartite graph of this problem $\BfullS$ consists of the vertex set of users $\Uc=\{u_1,u_2,u_3\}$, the vertex set of messages $\Pc_S=\{\xb_i | i \in S\}$, and all the edges in $\Bfull$ with one of the end points in $\Pc_S$.
It is not difficult to see, that the ESUP of $\BfullS$ is $\GfullS$, the subgraph of $\Gfull$ induced by $S$. Note that the underlying undirected graph is $\GufullS$, and this is the subgraph of $\Gufull$ induced by $S$.
Note that for any unicast problem $\BfullS$, the number of messages in the problem is $|\Pc_S|=|S|$. 
We will now state the main result of this section, which provides the value of $\beta_{\BfullS}^*(1)$ for all three receiver unicast problems $\BfullS$, $S \subseteq [12]$. We provide the proof of this theorem in the rest of the section.

\begin{theorem} \label{thm:main_all_problems}
For any $S \subseteq [12]$, the optimal rate among index codes with locality $1$ for the three receiver unicast problem $\BfullS$ is $\bar{\chi}(\GufullS)$.
\end{theorem}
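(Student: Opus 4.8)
The plan is to prove the matching bounds $\beta_{\BfullS}^{\ast}(1)\le\bar{\chi}(\GufullS)$ and $\beta_{\BfullS}^{\ast}(1)\ge\bar{\chi}(\GufullS)$. For the upper bound, every induced subgraph of a perfect graph is perfect, so $\GufullS$ is perfect by Theorem~\ref{thm:Gu_perfect} and hence $\bar{\chi}_f(\GufullS)=\bar{\chi}(\GufullS)$; since $\GfullS$ is the ESUP of $\BfullS$, combining $\beta_{\BfullS}^{\ast}(1)\le\beta_{\GfullS}^{\ast}(1)$ from Theorem~\ref{thm:U_ESU} with $\beta_{\GfullS}^{\ast}(1)=\bar{\chi}_f(\GufullS)$ from~\eqref{eq:rate_clique_covering} gives $\beta_{\BfullS}^{\ast}(1)\le\bar{\chi}(\GufullS)$.

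For the converse I would split on the structure of $S$ using Theorem~\ref{thm:mais_less_alpha}. \textbf{Case~1:} $S$ is \emph{not} of the form $\{2,10,7\}\cup I$ or $\{3,6,11\}\cup I$ with $I=\{4,8,12\}\cap S$. Then Theorem~\ref{thm:mais_less_alpha} and Lemma~\ref{lma:mais_ind_set} give $\mais(\GfullS)=\alpha(\GufullS)$. Using Theorem~\ref{thm:U_ESU} together with the standard bounds $\beta_{G}^{\ast}(r)\ge\betaoptG\ge\mais(G)$, we obtain $\beta_{\BfullS}^{\ast}(1)\ge\beta_{\GfullS}^{\ast}(\max_i|W_i\cap S|)\ge\mais(\GfullS)=\alpha(\GufullS)$, which equals $\bar{\chi}(\GufullS)$ because $\GufullS$ is perfect. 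This settles Case~1.

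\textbf{Case~2:} $S=\{2,10,7\}\cup I$ or $S=\{3,6,11\}\cup I$. By Corollary~\ref{cor:mais_less_alpha}, $\GufullS$ is edgeless and $\bar{\chi}(\GufullS)=|S|$, so it suffices to show that any valid index code for $\BfullS$ with locality $1$ and message length $m$ has codeword length $\ell\ge m|S|$. I would first establish a structural fact about locality-$1$ codes: since $r_i=1$ forces $|R_i|=m|W_i\cap S|$, and since $u_i$ decodes $\xb_{W_i\cap S}$ correctly for every value of its interference $\xb_{Y_i\cap S}$, a counting argument on the maps $\xb_{W_i\cap S}\mapsto\cb_{R_i}$ (which for each fixed value of $\xb_{K_i\cap S}$ and of $\xb_{Y_i\cap S}$ are injections between sets of equal size $|\Ac|^{m|W_i\cap S|}$, hence bijections whose two-sided inverse $\Df_i(\cdot,\xb_{K_i\cap S})$ does not depend on the interference) shows that $\cb_{R_i}$ does not depend on $\xb_{Y_i\cap S}$; thus $\cb_{R_i}$ is a function of $\xb_{(W_i\cup K_i)\cap S}$ alone, and taking the messages i.i.d.\ uniform over $\Ac$, the decodability forces $H(\cb_{R_i}\mid\xb_{K_i\cap S})=|R_i|\log|\Ac|$.

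The crux is then the cyclic structure of the three ``core'' messages $\{2,7,10\}$ (resp.\ $\{3,6,11\}$): each receiver's side information lying in $S$ is exactly the message demanded by one of the other two receivers, and for those ordered pairs $(i,j)$ the message sets relevant to $u_i$ and to $u_j$ intersect in precisely the single side-information message of $u_i$. Hence any codeword symbol in $R_i\cap R_j$ is a function of $u_i$'s side information alone, so it becomes a constant after conditioning on $\xb_{K_i\cap S}$; this contradicts $H(\cb_{R_i}\mid\xb_{K_i\cap S})=|R_i|\log|\Ac|$ unless $R_i\cap R_j=\phi$. Running this over the three pairs makes $R_1,R_2,R_3$ pairwise disjoint, so $\ell\ge|R_1\cup R_2\cup R_3|=|R_1|+|R_2|+|R_3|=m\sum_i|W_i\cap S|=m|S|=m\,\bar{\chi}(\GufullS)$; the case $S\supseteq\{3,6,11\}$ is identical after relabelling the receivers. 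The main obstacle is exactly Case~2: the generic bound $\beta\ge\mais$ there falls one short of $\bar{\chi}$, the cyclic dependence among the core messages blocks the usual topological-ordering decoding argument, and recovering the missing ``$+1$'' seems to require the locality constraint, via the disjointness of $R_1,R_2,R_3$.
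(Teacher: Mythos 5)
Your proposal is correct and follows the same overall architecture as the paper: achievability via perfectness of $\GufullS$, Theorem~\ref{thm:U_ESU}, and \eqref{eq:rate_clique_covering}; a converse split on whether $\mais(\GfullS)=\alpha(\GufullS)$; Case~1 handled by the chain $\beta_{\BfullS}^*(1)\ge \mais(\GfullS)=\alpha(\GufullS)=\bar\chi(\GufullS)$; and Case~2 handled by showing $R_1,R_2,R_3$ must be pairwise disjoint, so that $\ell\ge m|S|=m\bar\chi(\GufullS)$. Where you genuinely diverge from the paper is in the mechanics of Case~2. The paper proves $H(\cb_{R_i}\mid\xb_{K_i},\xb_{W_i})=0$ and $H(\cb_{R_i}\mid\xb_{K_i})=|R_i|$ by an entropy computation, then invokes a dedicated entropy lemma (Lemma~\ref{lma:XYZU}) together with Corollary~\ref{cor:mais_less_alpha} to pin $\cb_{R_i\cap R_j}$ to a function of $\xb_A$ with $A\subseteq K_i$. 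You instead establish the same structural fact deterministically via a bijection/counting argument on the decoder, and then argue directly that $\cb_{R_i\cap R_j}$, being a subvector of both $\cb_{R_i}$ and $\cb_{R_j}$, must be a function of the single overlap message; you also replace the abstract ``$A\subseteq K_i$'' step with an explicit computation of the relevant intersections for $S=\{2,7,10\}\cup I$ and $S=\{3,6,11\}\cup I$. The one step you state but do not justify is the passage from ``$\cb_{R_i\cap R_j}$ is a function of $\xb_{(W_i\cup K_i)\cap S}$ and also of $\xb_{(W_j\cup K_j)\cap S}$'' to ``it is a function of their intersection alone''; this holds deterministically when the message blocks are independent (if $U=f(\xb_C,\xb_{A'})=g(\xb_C,\xb_{B'})$ with $C,A',B'$ disjoint, fixing $\xb_C$ forces both sides to be constant), and is exactly the content of the paper's Lemma~\ref{lma:XYZU}, so it is worth making explicit. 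Overall your route is a more elementary rendering of the same idea; what the paper's entropy formulation buys in exchange is a uniform argument that doesn't require writing out the intersections case by case.
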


The proof of achievability of Theorem~\ref{thm:main_all_problems}, i.e., the proof of the claim $\beta_{\BfullS}^*(1) \leq \bar{\chi}(\GufullS)$, is similar to that of Theorem~\ref{thm:3_Rxs_sp_case}.
Since $\Gufull$ is perfect (Theorem~\ref{thm:Gu_perfect}), the vertex induced subgraph $\GufullS$ is perfect as well.
Achievability follows from the perfectness of $\GufullS$, the second inequality in the statement of Theorem~\ref{thm:U_ESU} and~\eqref{eq:rate_clique_covering}.

Our converse for Theorem~\ref{thm:main_all_problems} proceeds considering two cases: \emph{(i)}~\mbox{$\mais(\GfullS) = \alpha(\GufullS)$}, \emph{(ii)}~\mbox{$\mais(\GfullS) < \alpha(\GufullS)$}.
We will now provide the proof of converse for these two cases separately.

\subsection*{Converse for the case $\mais(\GfullS) = \alpha(\GufullS)$}

The converse in this case is similar to that of Theorem~\ref{thm:3_Rxs_sp_case}.
As with the proof of converse of Theorem~\ref{thm:3_Rxs_sp_case}, the converse relies on the first inequality in the statement of Theorem~\ref{thm:U_ESU}, the fact that any valid index code for $\BfullS$ has rate at least $\mais(\GfullS)$, the hypothesis $\mais(\GfullS)=\alpha(\GufullS)$, and the fact that $\GufullS$ is perfect, i.e., $\alpha(\GufullS) = \bar{\chi}(\GufullS)$.


\subsection*{Converse for the case $\mais(\GfullS) < \alpha(\GufullS)$}

The proof technique used in the converse of Theorem~\ref{thm:3_Rxs_sp_case} does not hold for the unicast problems where $\mais(\GfullS)<\alpha(\GufullS)$. 
Theorem~\ref{thm:mais_less_alpha} identifies these problems.  
The graph $\GfullS$ of one such problem is shown in Fig.~\ref{fig:special_side_graph}.
\begin{figure}[t]
    \centering
    \includegraphics[width=3in]{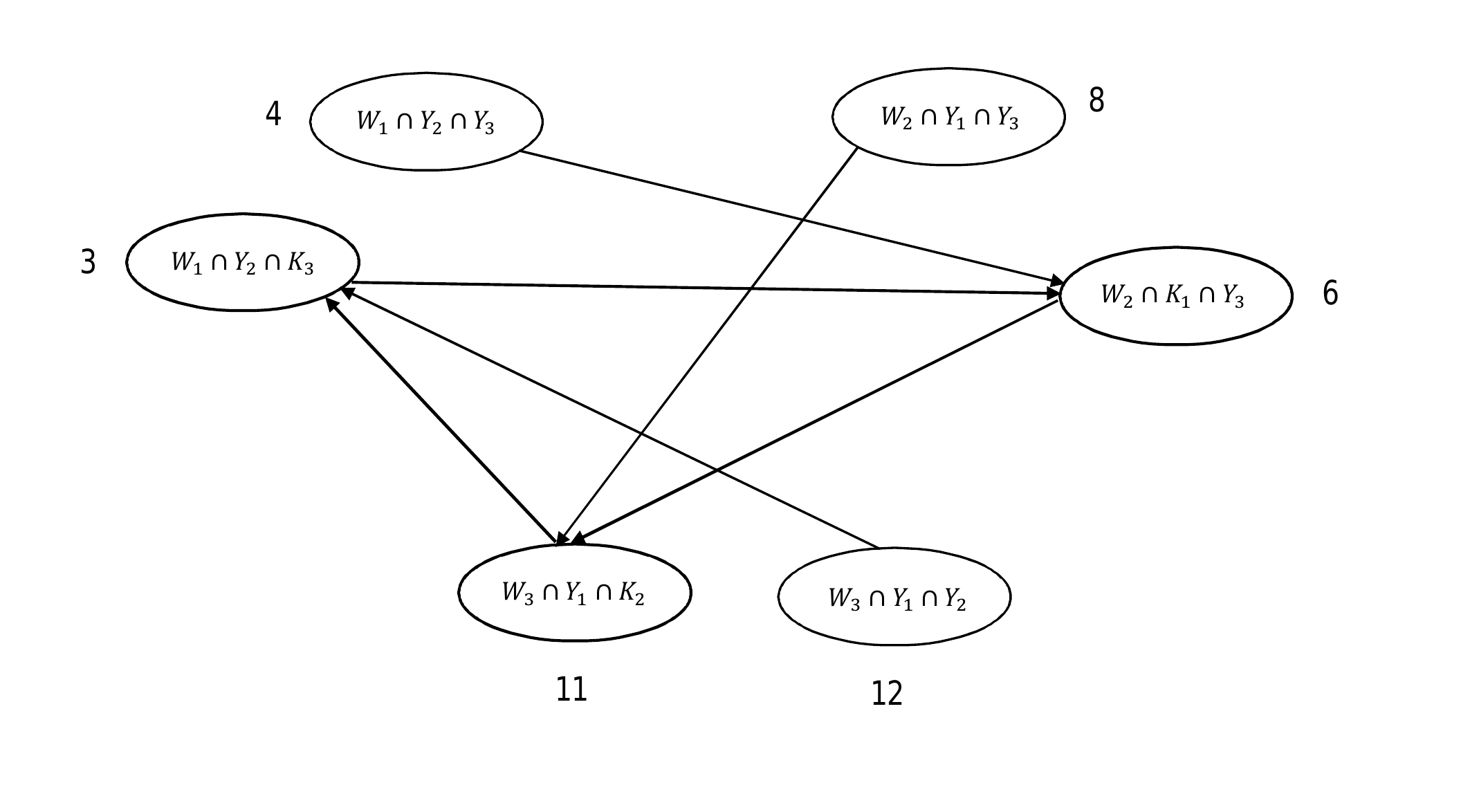}
    \vspace{-5mm}
    \caption{Side information graph $\GfullS$ for $S=\{3,6,11\} \cup \{4,8,12\}$.}
    \label{fig:special_side_graph}
\end{figure}
For these unicast problems, we use an information theoretic argument to find $\beta_{\BfullS}^*(1)$. 
The following lemma will be useful in proving the converse.

\begin{lemma} \label{lma:XYZU}
Let the random variables $X,Y,Z$ be independent, and let the random variable $U$ satisfy $H(U|X,Y) = H(U|X,Z) = 0$. Then $H(U|X) = 0$.
\end{lemma}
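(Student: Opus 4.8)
The statement is a short information-theoretic fact about entropies, so I would prove it directly from the axioms of Shannon entropy rather than invoke any graph theory. The key identity is that $H(U|X,Y) = 0$ means $U$ is a deterministic function of $(X,Y)$, and similarly $H(U|X,Z)=0$ means $U$ is a deterministic function of $(X,Z)$. The goal is to combine these with the independence of $X,Y,Z$ to conclude that $U$ is already determined by $X$ alone.

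\medskip

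\noindent\textbf{Approach via conditional mutual information.} First I would write $H(U|X) = H(U|X) - H(U|X,Y,Z)$, using $H(U|X,Y,Z) \le H(U|X,Y) = 0$. Then expand this difference as the conditional mutual information $I(U;Y,Z\,|\,X)$ and apply the chain rule:
\begin{equation*}
H(U|X) = I(U;Y,Z\,|\,X) = I(U;Y|X) + I(U;Z\,|\,X,Y).
\end{equation*}
The first term $I(U;Y|X) = H(U|X) - H(U|X,Y) = H(U|X)$, which just returns us to the start, so that decomposition alone is circular. Instead I would use the \emph{other} chain-rule order, $I(U;Y,Z|X) = I(U;Z|X) + I(U;Y|X,Z)$, and note $I(U;Y|X,Z) = H(U|X,Z) - H(U|X,Y,Z) = 0$ since both terms vanish. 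Hence $H(U|X) = I(U;Z|X)$. Symmetrically, $H(U|X) = I(U;Y|X)$ as well. The remaining step is to show $I(U;Z|X) = 0$, i.e. that $U$ and $Z$ are conditionally independent given $X$; this should follow because $U = f(X,Y)$ is a function of $X$ and $Y$, and $Y \perp Z \mid X$ (indeed $Y\perp Z$ unconditionally, and both are independent of $X$), so $U - (X,Y) - (X,Z)$ is effectively a Markov-type structure. Concretely: $I(U;Z|X) \le I(X,Y;Z|X) = I(Y;Z|X) = 0$ because $X,Y,Z$ are mutually independent. Therefore $H(U|X) = 0$.

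\medskip

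\noindent\textbf{Main obstacle.} The only delicate point is getting the conditioning and the Markov structure exactly right: one must be careful that $H(U|X,Y)=0$ really licenses the data-processing step $I(U;Z|X)\le I(Y;Z|X)$ (which it does, since conditioning on $X$, the variable $U$ is a function of $Y$), and that the mutual independence of $X,Y,Z$ gives $I(Y;Z|X)=0$ and not merely $I(Y;Z)=0$ — these coincide here precisely because $X$ is independent of the pair $(Y,Z)$. I expect the whole argument to fit in a few lines once the right chain-rule ordering is chosen; the risk is choosing the circular ordering first, so I would lead with the observation that one must expand $I(U;Y,Z|X)$ by peeling off $Z$ first (or $Y$ first) so that the "$U$ determined by $X$ and the other two" hypotheses each kill one term.
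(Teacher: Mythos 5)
Your proof is correct, and it takes a genuinely (if mildly) different route from the paper's. The paper argues in two stages: it first uses $H(U|X,Y)=0$ together with independence to expand $0=I(U;Z|X,Y)=H(Z)-H(Z|X,Y,U)$ and conclude that $Z$ is independent of the pair $(X,U)$; it then expands $H(U)=I(U;X,Z)$ (from $H(U|X,Z)=0$) using that intermediate independence to land on $H(U|X)=0$. You instead write $H(U|X)=I(U;Y,Z|X)$ (valid since $H(U|X,Y,Z)\le H(U|X,Y)=0$), peel off $Z$ first so that $I(U;Y|X,Z)=H(U|X,Z)-H(U|X,Y,Z)=0$, and then dispose of the remaining term by a conditional data-processing bound $I(U;Z|X)\le I(X,Y;Z|X)=I(Y;Z|X)=0$, using that $U$ is a function of $(X,Y)$ and that $X$ is independent of $(Y,Z)$; you also correctly identify and avoid the circular chain-rule ordering. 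Both arguments require mutual (not merely pairwise) independence of $X,Y,Z$ --- the paper through $H(Z|X,Y)=H(Z)$, you through $I(Y;Z|X)=0$ --- and you rightly flag that this is where it enters. Your decomposition is a bit more compact and displays the symmetric roles of the two hypotheses; the paper's avoids invoking data processing and makes the independence of $Z$ from $(X,U)$ explicit, which is marginally more self-contained for a reader verifying each entropy identity line by line.
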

\begin{proof}
Since $H(U|X,Y) = 0$, we have $H(U|X,Y,Z)=0$. Hence, $I(U;Z|X,Y) = H(U|X,Y) - H(U|X,Y,Z) = 0$. Using the chain rule of mutual information
\begin{align*}
0 &= I(U;Z|X,Y) = H(Z|X,Y) - H(Z|X,Y,U) \\
  &= H(Z) - H(Z|X,Y,U)  ~~~~~(\text{since } X,Y,Z \text{ are ind.} )
\end{align*} 
We conclude from the above equation that $Z$ is independent of $(X,Y,U)$, and hence, $Z$ is independent of $(X,U)$.

Since $H(U|X,Z) = 0$, we have $H(U) = I(U;X,Z)$, i.e.,
\begin{align*}
0 &= I(U;X,Z) - H(U) = H(X,Z) - H(X,Z|U) - H(U) \\
  &= H(X) + H(Z) - H(X|U) - H(Z|X,U) - H(U) \\
  &= H(X) - H(X|U) - H(U)~~~(\text{since } Z, (X,U) \text{ are ind.} ) \\
  &= I(X;U) - H(U) = \,-H(U|X).
\end{align*} 
\end{proof}



Consider any $S$ such that \mbox{$\mais(\GfullS) < \alpha(\GufullS)$}. Let the index set of demands and side information of receiver $u_i$ in the problem $\BfullS$ be $W_i$ and $K_i$, respectively. 
Using the same indexing of messages as in Fig.~\ref{fig:side_graph}, we have $W_1=\{1,\dots,4\} \cap S$, $W_2 = \{5,\dots,8\} \cap S$, $W_3 = \{9,\dots,12\} \cap S$, $K_1 = \{5,6,9,10\} \cap S$, $K_2 = \{1,2,9,11\} \cap S$, $K_3 = \{1,3,5,7\} \cap S$.
Also, let $R_1,R_2,R_3$ be the index set of codeword symbols observed by the receivers $u_1,u_2,u_3$, respectively. 

\begin{lemma} \label{lma:intersection_R}
If $S$ is such that $\mais(\GfullS) < \alpha(\GufullS)$, then for any valid index code for $\BfullS$ with locality $1$, we have \mbox{$|R_i \cap R_j|=0$} for every choice of $1 \leq i < j \leq 3$.
\end{lemma}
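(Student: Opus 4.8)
The plan rests on two consequences of the locality-one constraint. The first is immediate: $r=1$ forces $|R_i|=m|W_i|$ for every $i$. The second is that $\cb_{R_i}$ is then a deterministic function of $(\xb_{W_i},\xb_{K_i})$. I would obtain this by an entropy squeeze: decodability gives $H(\xb_{W_i}\mid\cb_{R_i},\xb_{K_i})=0$, so, using $\xb_{W_i}\perp\xb_{K_i}$ and $H(\cb_{R_i}\mid\xb_{K_i})\le|R_i|\log|\Ac|=m|W_i|\log|\Ac|=H(\xb_{W_i})$, the chain $H(\xb_{W_i})=I(\xb_{W_i};\cb_{R_i}\mid\xb_{K_i})\le H(\cb_{R_i}\mid\xb_{K_i})\le H(\xb_{W_i})$ consists of equalities throughout, whence $H(\cb_{R_i}\mid\xb_{W_i},\xb_{K_i})=0$. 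In particular every individual symbol $c_k$ with $k\in R_i$ is a deterministic function of the messages indexed by $W_i\cup K_i$.

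Next I would invoke Theorem~\ref{thm:mais_less_alpha} to reduce to $S=\{2,7,10\}\cup I$ or $S=\{3,6,11\}\cup I$ with $I\subseteq\{4,8,12\}$, and write out the sets $W_i,K_i$ for these two families using the indexing fixed just before the lemma. A direct inspection of those lists shows that for every pair $\{i,j\}$ the sets $W_i\cup K_i$ and $W_j\cup K_j$ intersect in exactly one index $p$, and that $p$ lies in the side information set of exactly one of the two receivers: $\xb_p$ is the message that one of $u_i,u_j$ demands while the other holds it as side information, which reflects the directed triangle among $\{2,7,10\}$ (resp.\ $\{3,6,11\}$) of Lemma~\ref{lma:2_cycles_G} together with the fact that the remaining messages $\xb_4,\xb_8,\xb_{12}$ are each demanded by a single receiver and known to no one. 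Fix such a pair, and relabel so that $p\in K_i$.

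Now suppose, toward a contradiction, that some $k\in R_i\cap R_j$. Then $c_k$ is a deterministic function of the messages indexed by $W_i\cup K_i$ and also of those indexed by $W_j\cup K_j$. I would apply Lemma~\ref{lma:XYZU} with $X=\xb_p$, $Y=(\xb_t:t\in(W_i\cup K_i)\setminus\{p\})$, $Z=(\xb_t:t\in(W_j\cup K_j)\setminus\{p\})$ and $U=c_k$: these three index sets are pairwise disjoint because $(W_i\cup K_i)\cap(W_j\cup K_j)=\{p\}$, so $X,Y,Z$ are independent, and $H(U\mid X,Y)=H(U\mid X,Z)=0$ by the first paragraph. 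The lemma then yields $H(c_k\mid\xb_p)=0$. Since this holds for every $k\in R_i\cap R_j$, the subvector $\cb_{R_i\cap R_j}$ is a deterministic function of $\xb_p$, hence of $\xb_{K_i}$ (recall $p\in K_i$).

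To finish: because $\cb_{R_i\cap R_j}$ is determined by $\xb_{K_i}$, the decoding condition $H(\xb_{W_i}\mid\cb_{R_i},\xb_{K_i})=0$ collapses to $H(\xb_{W_i}\mid\cb_{R_i\setminus R_j},\xb_{K_i})=0$, and therefore, using $\xb_{W_i}\perp\xb_{K_i}$ once more, $m|W_i|\log|\Ac|=H(\xb_{W_i})=I(\xb_{W_i};\cb_{R_i\setminus R_j}\mid\xb_{K_i})\le|R_i\setminus R_j|\log|\Ac|=(m|W_i|-|R_i\cap R_j|)\log|\Ac|$, which forces $|R_i\cap R_j|=0$, contradicting $k\in R_i\cap R_j$. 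Running this argument for each of the three pairs $\{1,2\},\{1,3\},\{2,3\}$ proves the lemma. The steps I expect to require the most care are the entropy squeeze of the first paragraph and the verification, for both families of $S$, that each pair of ``demanded-or-known'' sets meets in a single index lying in the side information of exactly one of the two receivers; once the explicit lists of $W_i,K_i$ are written out these become routine checks.
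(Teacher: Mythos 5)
Your proposal is correct and follows essentially the same route as the paper: locality one plus decodability forces $\cb_{R_i}$ to be a deterministic function of $(\xb_{W_i},\xb_{K_i})$, and Lemma~\ref{lma:XYZU} then shows every symbol in $R_i\cap R_j$ is a function of the messages common to $(W_i\cup K_i)$ and $(W_j\cup K_j)$, which lie in the side information of one of the two receivers, forcing $|R_i\cap R_j|=0$. The only (harmless) differences are that you invoke the explicit characterization of $S$ from Theorem~\ref{thm:mais_less_alpha} where the paper only needs the independence of $S$ in $\Gufull$ via Corollary~\ref{cor:mais_less_alpha} (so that $A=(K_i\cup W_i)\cap(K_j\cup W_j)\subseteq K_i$), and your concluding count at receiver $u_i$ replaces the paper's observation that $\cb_{R_i\cap R_j}$ is uniform and independent of $\xb_{K_i}$.
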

\begin{proof}
Assume that the messages $\xb_j, j \in S$ are random, independent of each other and are uniformly distributed in $\Ac^m$. The logarithms used in measuring mutual information and entropy will be calculated to the base $|\Ac|$. Thus, $H(\xb_j)=m$, $H(\xb_{W_i}) = m|W_i|$ and $\xb_{W_i}$ and $\xb_{K_i}$ are independent of each other since $W_i \cap K_i = \phi$.

Let $\cb$ be the codeword generated by a valid index code for $\BfullS$ with $r=1$. For any $i \in [3]$, $u_i$ can decode $\xb_{W_i}$ from $\xb_{K_i}$ and $\cb_{R_i}$, and hence,
\begin{align}
I(\xb_{W_i};\cb_{R_i}|\p{x}_{K_i}) &= H(\xb_{W_i}) = m|W_i| \nonumber \\
&= H(\cb_{R_i}|\p{x}_{K_i}) - H(\cb_{R_i}|\p{x}_{K_i},\xb_{W_i}). \label{eq:thm:mais_less_alpha:2}
\end{align} 
Since $H(\cb_{R_i}|\p{x}_{K_i}) \leq |R_i| = m|W_i|$ and $H(\cb_{R_i}|\p{x}_{K_i},\xb_{W_i}) \geq 0$, we conclude from~\eqref{eq:thm:mais_less_alpha:2} that $H(\cb_{R_i}|\p{x}_{K_i},\xb_{W_i})=0$ and $H(\cb_{R_i}|\p{x}_{K_i}) = H(\cb_{R_i}) = |R_i| = m|W_i|$, i.e., $\cb_{R_i}$ is a function of $\xb_{K_i}$ and $\xb_{W_i}$, and $\cb_{R_i}$ is independent of $\xb_{K_i}$ and is uniformly distributed in $\Ac^{|R_i|}$.
For any $i \neq j$ and $i,j \in [3]$, $\cb_{R_i \cap R_j}$ is a sub-vector of $\cb_{R_i}$, and thus,
\begin{align}
H(\cb_{R_i\cap R_j}|\p{x}_{K_i}) &= H(\cb_{R_i \cap R_j}) = |R_i\cap R_j| \label{eq:entropy1} \\
H(\cb_{R_i\cap R_j}|\p{x}_{K_i},\xb_{W_i}) &= H(\cb_{R_i \cap R_j} | \xb_{K_i \cup W_i}) = 0 \label{eq:entropy10}
\end{align}
Similarly considering the decoding operation at $u_j$,
\begin{align}
H(\cb_{R_i\cap R_j}|\p{x}_{K_j}) &= |R_i\cap R_j| \label{eq:entropy2} \\
H(\cb_{R_i\cap R_j}|\p{x}_{K_j},\xb_{W_j}) &= H(\cb_{R_i \cap R_j}|\xb_{K_j \cup W_j}) = 0   \label{eq:entropy20}
\end{align}

We will now use Lemma~\ref{lma:XYZU} with~\eqref{eq:entropy10} and~\eqref{eq:entropy20} to proceed with the proof. For this, let $A = (K_i \cup W_i) \cap (K_j \cup W_j)$. Then,~\eqref{eq:entropy10} and~\eqref{eq:entropy20} can be rewritten as
\begin{align}
0 &= H(\cb_{R_i \cap R_j} | \xb_A,\xb_{(K_i \cup W_i) \setminus A}) \nonumber \\
  &= H(\cb_{R_i \cap R_j} | \xb_A,\xb_{(K_j \cup W_j) \setminus A})  \label{eq:entropy3}
\end{align} 
Since $A,(K_i \cup W_i) \setminus A,(K_j \cup W_j) \setminus A$ are non-intersecting, the random variables $\xb_A,\xb_{(K_i \cup W_i) \setminus A},\xb_{(K_j \cup W_j) \setminus A}$ are independent.
From Lemma~\ref{lma:XYZU} and~\eqref{eq:entropy3}, we deduce 
\begin{equation} \label{eq:entropy4}
H(\cb_{R_i \cap R_j} | \xb_A) = 0.
\end{equation} 
Note that 
\begin{equation} \label{eq:thm:mais_less_alpha:A}
A = (K_i \cap W_j) \cup (K_i \cap K_j) \cup (W_i \cap K_j) \cup (W_i \cap W_j).
\end{equation} 

If $W_i \cap K_j \neq \phi$ and $W_j \cap K_i \neq \phi$, then there exist two vertices in $\GufullS$ that are adjacent to each other, implying that $S$ is not an independent set in $\Gufull$.
However, Corollary~\ref{cor:mais_less_alpha} implies that $S$ is indeed independent in $\Gufull$. Thus, we conclude that at least one of $W_i \cap K_j$, $W_j \cap K_i$ is empty. Without loss of generality we will assume that $W_i \cap K_j = \phi$.
We also know $W_i\cap W_j =\phi$ since this is a unicast index coding problem. Using these facts in~\eqref{eq:thm:mais_less_alpha:A}, we conclude $A \subseteq K_i$. Combining this with~\eqref{eq:entropy4}, we have
\begin{equation*}
H( \cb_{R_i \cap R_j} | \xb_{K_i} ) = 0.
\end{equation*} 
This equality together with~\eqref{eq:entropy1} implies $|R_i \cap R_j| = 0$.
This completes the proof.
\end{proof}

We can now provide a lower bound on the rate of valid index codes with locality one, and complete the proof of the converse.

\begin{lemma}
If $S$ is such that \mbox{$\mais(\GfullS) < \alpha(\GufullS)$}, then we have $\beta_{\BfullS}^{*}(1) \geq \bar{\chi}(\GufullS)$.
\end{lemma}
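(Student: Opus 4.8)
The plan is to combine the structural facts already in hand with the entropy inequalities proved in Lemma~\ref{lma:intersection_R}. By Theorem~\ref{thm:mais_less_alpha} and Corollary~\ref{cor:mais_less_alpha}, the hypothesis $\mais(\GfullS)<\alpha(\GufullS)$ forces $S$ to be one of the two families $\{2,7,10\}\cup I$ or $\{3,6,11\}\cup I$ with $I\subseteq\{4,8,12\}$; in particular $S$ is an independent set in $\Gufull$, so $\GufullS$ has no edges and $\bar{\chi}(\GufullS)=|S|$. Hence the target inequality is simply $\beta_{\BfullS}^{*}(1)\geq|S|$, equivalently $\ell\geq m|S|$ for any valid locality-one code of codeword length $\ell$ and message length $m$.

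First I would instantiate the probabilistic setup of Lemma~\ref{lma:intersection_R}: take the messages $\xb_j$, $j\in S$, independent and uniform over $\Ac^m$, with entropies to base $|\Ac|$. Exactly as in the proof of that lemma, locality one gives $|R_i|=m|W_i|$ for each $i\in[3]$, and $\cb_{R_i}$ is uniformly distributed on $\Ac^{|R_i|}$. The lemma then yields $|R_i\cap R_j|=0$ for all $1\le i<j\le 3$, so the three sets $R_1,R_2,R_3$ are pairwise disjoint subsets of $[\ell]$. Therefore
\begin{equation*}
\ell \;\geq\; |R_1|+|R_2|+|R_3| \;=\; m\big(|W_1|+|W_2|+|W_3|\big) \;=\; m\,|S|,
\end{equation*}
where the last equality uses the unicast partition $S=W_1\cup W_2\cup W_3$ with the $W_i$ pairwise disjoint. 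Dividing by $m$ and taking the infimum over valid codes gives $\beta_{\BfullS}^{*}(1)\geq|S|=\bar{\chi}(\GufullS)$, as required.

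The only genuine content here is Lemma~\ref{lma:intersection_R}, which is already established; everything else is bookkeeping. The one point that deserves a sentence of care is why the $W_i$ are pairwise disjoint and cover $S$: this is immediate because $\BfullS$ is a unicast problem (so $W_i\cap W_j=\phi$) obtained as the subproblem of $\Bfull$ induced by $S$, and in $\Bfull$ the twelve message subsets partition into $W_1\cup W_2\cup W_3$, so intersecting with $S$ partitions $S$. Thus the "obstacle", such as it is, is purely the disjointness-of-the-$R_i$ step, which has been discharged in advance; the present lemma is then a one-line consequence.
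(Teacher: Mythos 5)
Your proposal is correct and follows essentially the same route as the paper: invoke Lemma~\ref{lma:intersection_R} to get pairwise disjointness of $R_1,R_2,R_3$, sum $|R_i|=m|W_i|$ over the unicast partition of $S$ to obtain $\ell \geq m|S|$, and use Corollary~\ref{cor:mais_less_alpha} to identify $|S|$ with $\bar{\chi}(\GufullS)$. The only cosmetic difference is that you use disjointness directly where the paper writes out inclusion--exclusion.
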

\begin{proof}
Consider any valid index code with locality $1$ for $\BfullS$. Let the length of the code be $\ell$. 
Since $r=1$, we have $|R_i|=m|W_i|$ for $i \in [3]$. Using the fact $W_1,W_2,W_3$ are pairwise non-intersecting and the number of messages is $|S|$, we have $\sum_{i \in [3]}|W_i| = |S|$ and hence, $\sum_{i \in [3]}|R_i| = m|S|$.
Since $R_1,R_2,R_3 \subseteq [\ell]$, we have
\begin{align}
\ell &\geq \, |\! \cup_{i \in [3]} \! R_i| \nonumber \\
     &= |R_1| + |R_2| + |R_3| - \sum_{i \neq j} |R_i \cap R_j| + |R_1 \cap R_2 \cap R_3| \nonumber \\
     &= m|S| - \sum_{i \neq j} |R_i \cap R_j| + |R_1 \cap R_2 \cap R_3|. \label{eq:thm:mais_less_alpha}
\end{align} 
From Lemma~\ref{lma:intersection_R}, we know that $|R_i \cap R_j| = 0$ for any $i \neq j$. This implies $|R_1 \cap R_2 \cap R_3|=0$, and hence from~\eqref{eq:thm:mais_less_alpha}, we have $\ell \geq m|S|$, i.e., $\beta = \ell/m \geq |S|$.
From Corollary~\ref{cor:mais_less_alpha}, $|S|= \bar{\chi}(\GufullS)$. This completes the proof.
\end{proof}

\section{Conclusion}

We considered three receiver unicast index coding problems and obtained optimum broadcast rate for locality equal to one. Our future work will be on finding the optimum broadcast rates at larger values of localities.




\end{document}